\definecolor{myyellow}{RGB}{242,226,149}
\definecolor{grey}{RGB}{150,150,150}
\definecolor{myblue}{RGB}{200,220,230}
\pgfplotsset{compat=newest}
\theoremstyle{plain}
\newtheorem{theorem}{Theorem}[section]
\newtheorem{lemma}[theorem]{Lemma}
\newtheorem{proposition}[theorem]{Proposition}
\theoremstyle{definition}
\newtheorem{remark}[theorem]{Remark}
\newcommand*{\cC}{\mathcal{C}}
\newcommand*{\cD}{\mathcal{D}}
\newcommand*{\cH}{\mathcal{H}}
\newcommand*{\cN}{\mathcal{N}}
\newcommand*{\Z}{\mathbb{Z}}
\newcommand*{\CC}{\mathbb{C}}
\newcommand*{\<}{\left\langle}
\renewcommand*{\>}{\right\rangle}
\renewcommand{\hat}[1]{\widehat{#1}}
\newcommand{\dens}{\mathcal{D}}
\DeclareMathOperator{\tr}{Tr}
\DeclareMathOperator{\Tr}{Tr}
\newcommand{\ket}[1]{|#1\rangle}
\newcommand{\bra}[1]{\langle #1|}
\newcommand{\ketbra}[2]{|#1\rangle\langle#2|}
\newcommand{\id}{\mathbbm 1}
\newcommand{\edit}[1]{{\color{red}#1}}
\newcommand{\Loc}[1]{\hat{\cC}^{\mathrm{loc}}_{#1}}
\newcommand{\LocE}{\hat{\cC}^{\mathrm{loc}}_{E}}
\newcommand{\LocSOS}[1]{\hat{\cC}^{\mathrm{sos}}_{#1}}
\newcommand{\LocTI}[1]{\hat{\cC}^{\mathrm{loc,TI}}_{#1}}
\newcommand{\WM}[1]{\hat{\cC}^{\mathrm{WM}}_{#1}}
\newcommand{\WME}{\hat{\cC}^{\mathrm{WM}}_{E}}
\newcommand{\WMTI}[1]{\hat{\cC}^{\mathrm{WM,TI}}_{#1}}
\newcommand{\MEDTI}[1]{\hat{\cC}^{\mathrm{MED,TI}}_{#1}}
\newcommand{\TI}{\cC^{\mathrm{TI}}}
\newcommand{\MED}{\mathrm{MED}}
\author[1]{Hamza Fawzi}
\author[2]{Omar Fawzi}
\author[1]{Samuel O. Scalet\footnote{sos25@cam.ac.uk}}
\affil[1]{Department of Applied Mathematics and Theoretical Physics, University of Cambridge, United Kingdom}
\affil[2]{Univ Lyon, Inria, ENS Lyon, UCBL, LIP, Lyon, France}
\title{Entropy Constraints for Ground Energy Optimization}
\renewcommand{\edit}[1]{#1}
\begin{document}
\maketitle
\begin{abstract}
We study the use of von Neumann entropy constraints for obtaining lower bounds on the ground energy of quantum many-body systems.
Known methods for obtaining certificates on the ground energy typically use consistency of local observables and are expressed as semidefinite programming relaxations. The local marginals defined by such a relaxation do not necessarily satisfy entropy inequalities that follow from the existence of a global state. Here, we propose to add such entropy constraints that lead to tighter convex relaxations for the ground energy problem.
We give analytical and numerical results illustrating the advantages of such entropy constraints. We also show limitations of the entropy constraints we construct: they are implied by doubling the number of sites in the relaxation and as a result they can at best lead to a quadratic improvement in terms of the matrix sizes of the variables.
We explain the relation to a method for approximating the free energy known as the Markov Entropy Decomposition method.
\end{abstract}

\section{Introduction}

A fundamental computational problem in quantum many-body theory is to compute the ground energy of local Hamiltonians. Consider a multipartite Hilbert space $\cH = \otimes_{v \in V} \CC^d$ with local dimension $d$, on a finite set of sites $V$. A $k$-local Hamiltonian, is a Hermitian operator on $\cH$ defined as
\begin{equation}
\label{eq:localH}
H=\sum_{A\in\binom{V}{k}} h_A,
\end{equation}
where each of the $h_A$ is a Hermitian operator acting nontrivially only on the set $A \subset V$, and $\binom{V}{k}$ is the set of subsets of $V$ of size $k$. In this paper, we will be mostly interested in 2-local Hamiltonians, where the interaction can be modeled by a graph $G=(V,E)$ on the set of sites $V$, and where a Hamiltonian term $h_{ij}$ is attached to each edge $ij \in E$:
\begin{equation}
\label{eq:2local}
H=\sum_{ij \in E} h_{ij}.
\end{equation}

The \emph{ground energy} of $H$ is defined as its smallest eigenvalue. Due to the special structure of $H$, its matrix representation is generally sparse and thus one can apply standard methods such as Lanczos iterations \cite{golub2013matrix} to compute its minimal eigenvalue. However, since the dimension of $H$ grows exponentially with $|V|$, this is only feasible for moderate values of $|V|$. It is of considerable theoretical and practical interest to find efficient algorithms, that scale polynomially in $|V|$, to approximate the ground energy of local Hamiltonians~\cite{gharibian2015quantum,varbench2023}.

The smallest eigenvalue of $H$ admits the following variational formulation:
\begin{equation}
\label{eq:variational}
\lambda_{\min}(H) = \min_{\psi \in \cH} \frac{\<\psi, H \psi\>}{\<\psi, \psi\>}.
\end{equation}
Variational methods posit a certain form for the state $\psi = \psi_{\theta}$, and find the value of the parameters $\theta$ that minimize the objective function of \eqref{eq:variational}. As such, these methods provide upper bounds on $\lambda_{\min}(H)$. A prominent example are tensor network states~\cite{Bridgeman2017,Cirac2021}, which have been extremely successful and in particular give provably efficient algorithms for gapped systems in one dimension \cite{White1992,Hastings2007,Landau2015}.

Another class of methods that have been studied in the literature are based on convex relaxations and provide lower bounds on $\lambda_{\min}(H)$. For 2-local Hamiltonians $H$ of the form \eqref{eq:2local}, computing the energy $\<\psi, H \psi\>$ only requires knowledge of the two-body marginals $\rho_{ij}$ of $\ketbra{\psi}{\psi}$ for $ij \in E$. If we denote by $\cC$ the set of two-body marginals that are consistent with a global state on $V$, i.e.,
\begin{equation}
\label{eq:C}
\cC = \cC_{d,V,E} = \left\{ (\rho_{ij})_{ij \in E} : \exists \rho \in \dens(\otimes_{v \in V}\CC^{d}), \text{ s.t. } \rho_{ij} = \tr_{V\setminus \{i,j\}} \rho \right\}
\end{equation}
where $\cD(\cH)$ denotes the set of density operators on a Hilbert space $\cH$, 
then one can write the ground energy problem for a 2-local Hamiltonian \eqref{eq:2local} as a linear optimization problem over $\cC$:
\begin{equation}
\label{eq:lambdaminC}
\lambda_{\min}(H) = \min_{(\rho_{ij}) \in \cC} \;\; \sum_{ij \in E} \tr[h_{ij} \rho_{ij}].
\end{equation}
To make this approach tractable, it is required to have a computationally efficient representation of the convex set $\cC$. Unfortunately, it is highly likely that $\cC$ does not have any simple representation, e.g., it is known that the problem of checking membership in $\cC_{2,[n],\binom{[n]}{2}}$ is QMA-hard \cite{Liu2006}.

Rather than aiming to describe $\cC$, we are interested in constructing efficient outer relaxations of $\cC$, i.e., tractable convex sets $\hat{\cC}$ such that $\cC \subset \hat{\cC}$. Replacing $\cC$ by $\hat{\cC}$ in \eqref{eq:lambdaminC} would then yield a lower bound on $\lambda_{\min}(H)$. Such relaxations $\hat{\cC}$ can be constructed by identifying \emph{necessary conditions} that any set of marginals $(\rho_{ij})$ which are globally consistent must satisfy. Most relaxations that have been constructed in the literature are based on semidefinite programming. We describe here the most popular approaches:
\begin{itemize}
\item A simple relaxation can be obtained by simply imposing that the two-body marginals are consistent on the intersection of their supports, i.e., one can take
\begin{equation}
\label{eq:2bodyloccons}
\begin{aligned}
\LocE=\Bigl\{ (\rho_{ij})_{ij \in E} : & \;  \rho_{ij} \geq 0, \tr \rho_{ij}=1 \;\; \forall ij \in E\\
& \; \text{ and } \tr_{j} \rho_{ij} = \tr_{j'} \rho_{ij'} \;\; \forall ij,ij' \in E \Bigr\}.
\end{aligned}
\end{equation}
This relaxation can be made tighter by introducing higher-order marginals of $\rho$, namely one can consider
\begin{equation}
\label{eq:kbodyloccons}
\begin{aligned}
\Loc{l} = \Bigl\{ (\rho_{ij})_{ij \in E} : & \; \exists (\rho_{S})_{|S| \leq l}, \;\; \rho_{S} \geq 0, \tr \rho_{S}=1 \; \forall S \in \binom{V}{l}\\
& \; \text{ and } \tr_{S\setminus S'} \rho_{S} = \tr_{S' \setminus S} \rho_{S'} \;\; \forall S,S' \in \binom{V}{l} \Bigr\}.
\end{aligned}
\end{equation}
It is clear that $\cC = \Loc{N} \subset \Loc{N-1} \subset \dots \subset \Loc{2}\subset \Loc{E}$, where $N=|V|$.
\item The Lasserre/sum-of-squares relaxation \cite{Mazziotti2006,Pironio2010,Baumgratz2012, Barthel2012,brandao2013product} stems from the observation that if $\ket{\psi}$ is a global state on $\cH$, then $\<\psi, O^{\dagger} O \psi\> \geq 0$ for any observable $O$ acting on $\cH$. In particular if $O$ is a $l$-local operator, then $O^{\dagger} O$ is at most $2l$-local, and $\<\psi, O^{\dagger} O \psi\>$ is linear in the expectation values $m_F = \<\psi, F \psi\>$ of $2l$-local observables $F$. It turns out that the infinite family of constraints
\begin{equation}
\label{eq:sos-psd}
\<\psi, O^{\dagger} O \psi\> \geq 0 \;\; \forall O \text{ $l$-local observable on $\cH$}    
\end{equation}
can be encoded as a single positive semidefinite (psd) constraint of a matrix whose entries are linear in the expectation values $(m_F)$.
\edit{More precisely the matrix is indexed by strings of product operators, each taken from an operator basis, up to weight $l$.
For two such operators $O$, $P$, the $(O,P)$ entry of the matrix is then given by the expectation value of the product $O^\dagger P$.
Requiring this matrix to be psd corresponds to the constraint~\eqref{eq:sos-psd}.}
The corresponding relaxation $\LocSOS{l}$ can then be expressed as
\[
\LocSOS{l} = \left\{ (\rho_{ij})_{ij \in E} : \Tr[\rho_{ij}]=1, \rho_{ij} = \sum_{\alpha,\beta} m_{\sigma_i^{\alpha} \sigma_j^{\beta}} \sigma^{\alpha} \otimes \sigma^{\beta} \text{ and psd constraint \eqref{eq:sos-psd}} \right\}
\]
where $\sigma^{\alpha}$ is an orthonormal basis of operators acting on $\CC^d$.
\edit{Since it allows more flexible choices of operators to be included in the basis, the Lasserre hierarchy is a customary choice for settings without an underlying lattice geometry such as problems from quantum chemistry.}
\end{itemize}

\paragraph{Main contributions} In this paper, we are interested in relaxations for the convex set $\cC$ that go beyond semidefinite programming. In particular, we are interested in relaxations that use \emph{entropies} of the local marginals. Recall that the von Neumann entropy of the system $A$ for the state $\rho$ is defined by
\[
S(A)_{\rho} = -\tr \rho_{A} \log \rho_{A},
\]
where $\rho_{A}$ denotes the reduced state of $\rho$ on the system $A$.
In addition, the conditional entropy is defined by
\[
S(A|B)_{\rho} = S(AB)_{\rho} - S(B)_{\rho}.
\]
An important property about the latter is that it is concave in $\rho_{AB}$; this follows from the identity
\[
S(A|B)_{\rho} = -D(\rho_{AB} \| \id_A \otimes \rho_B)
\]
where $D(\rho \| \sigma) = \tr [\rho(\log \rho - \log \sigma)]$ is the relative entropy function, which is jointly convex in $(\rho,\sigma)$ \cite{Lindblad1974}. 

The main contribution of this paper is to study two families of entropy constraints that yield new strengthened relaxations for the ground energy problem. These relaxations are obtained by imposing inequalities on the conditional entropies of local marginals of the global state, and can be combined with any of the existing semidefinite relaxations. By virtue of the concavity of the conditional entropy function, these relaxations are all convex and can be solved efficiently using tools from convex optimization \cite{nesterov1994nemirovski,fawzi2022optimal}. 

The first family of entropy constraints come from weak monotonicity, see Section~\ref{sec:weakMono}, and the second family of constraints, that we call Markov Entropy Decomposition (MED) constraints, are motivated by the work of Poulin and Hastings~\cite{Poulin2011} and are based on combining the chain rule together with strong subadditivity, see Section~\ref{sec:med}. In Section~\ref{sec:inf}, we consider both families in the special case of infinite lattice systems. Though weak monotonocity constraints are in many cases stronger than MED constraints, we show that in general the two families are not comparable. Our main message is that for many natural Hamiltonians, imposing entropy constraints can lead to significantly tighter bounds compared to simple consistency conditions captured by $\Loc{l}$. Families of examples are presented in Figure~\ref{fig:3siteproj1} and detailed in Section~\ref{sec:quant-advantage} and then the Heisenberg XXZ-chain is discussed in Section \ref{sec:numerical}. We also show limitations on the gains that can be obtained using weak monotonicity constraints (and also MED in many settings): entropy constraints involving $l$ sites are implied by consistency constraints on $2l-1$ sites, see Eq~\eqref{eq:sandwichWM}. As a result, as the size of the matrix variables involved in $\Loc{l}$ is exponential in $l$, entropy constraints can at most lead to a quadratic improvement in terms of the size of the matrix variables.

We note that weak monotonicity has been mentioned previously as a necessary condition for global consistency in \cite{carlen2013}. Another related work considered entropic constraints to the number of orthogonal pure state extensions in a quantum marginal problem \cite{osborne2008}.

An intriguing open question arising from our work is to construct other entropy constraints that could lead to tighter relaxations. This question is related to obtaining inequalities for the so-called quantum entropy cone (see e.g.,~\cite{linden2005new}) though it differs in several respects: in our case, the dimension of the subsystems is fixed, the number of systems involved is bounded by $l$, and in order to obtain convex relaxations, we look for expressions that are concave in the state $\rho$, e.g., conical linear combinations of conditional entropies.

\section{Weak Monotonicity Constraints}\label{sec:weakMono}

We start by recalling the following well-known entropy inequality, also known as weak monotonicity.
\begin{lemma}[Weak monotonicity]\label{lem:weakMono}
For any state $\rho_{ABC}$ on systems $ABC$, we have
\begin{equation}\label{eq:weakMono}
S(A|B)_{\rho}+S(A|C)_{\rho}\ge0.
\end{equation}
\end{lemma}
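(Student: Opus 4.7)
The plan is to derive weak monotonicity as a direct dual of strong subadditivity (SSA) via purification, which is the standard route. SSA states that for any tripartite state $\sigma_{XYZ}$ one has $S(XY)_\sigma + S(YZ)_\sigma \ge S(Y)_\sigma + S(XYZ)_\sigma$, and I will treat SSA as a given foundational inequality (Lieb--Ruskai); the work is then purely a bookkeeping computation that rewrites the left-hand side of \eqref{eq:weakMono} in a form to which SSA applies.

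Concretely, I would first introduce an auxiliary system $D$ and a purification $\ket{\psi}_{ABCD}$ of $\rho_{ABC}$. The key identity I will use is that, for any pure state and any subsystem $X$, the entropy of $X$ equals the entropy of its complement. Applying this to the bipartitions $AB|CD$ and $B|ACD$ of $\psi$ gives
\begin{equation*}
S(AB)_\rho = S(AB)_\psi = S(CD)_\psi, \qquad S(B)_\rho = S(B)_\psi = S(ACD)_\psi,
\end{equation*}
so that the conditional entropy admits the rewriting
\begin{equation*}
S(A|B)_\rho \;=\; S(AB)_\rho - S(B)_\rho \;=\; S(CD)_\psi - S(ACD)_\psi.
\end{equation*}

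Adding $S(A|C)_\rho = S(AC)_\psi - S(C)_\psi$ (noting that $\rho_{AC} = \psi_{AC}$ since tracing out $BD$ from $\psi$ yields $\rho$ and then tracing $B$) and regrouping the four terms, I get
\begin{equation*}
S(A|B)_\rho + S(A|C)_\rho \;=\; \bigl[S(AC)_\psi + S(CD)_\psi\bigr] - \bigl[S(C)_\psi + S(ACD)_\psi\bigr].
\end{equation*}
The right-hand side is exactly SSA applied to $\psi$ with the roles $X=A$, $Y=C$, $Z=D$, and is therefore nonnegative, which completes the proof.

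Since SSA is the only nontrivial input and we invoke it as a black box, there is no real obstacle beyond choosing the correct purification and tracking entropies through the complementation identity; the only place to be careful is in verifying that the marginals referenced (e.g.\ $\rho_{AC}$ versus $\psi_{AC}$) agree, which follows directly from the definition of a purification.
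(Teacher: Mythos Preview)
Your proof is correct and is essentially the same as the paper's: both purify $\rho_{ABC}$ on an auxiliary system $D$, use $S(AB)=S(CD)$ and $S(B)=S(ACD)$ to rewrite $S(A|B)$, and then invoke SSA on the systems $A,C,D$. The only cosmetic difference is that the paper packages the final step as the monotonicity form $S(A|C)\ge S(A|CD)$, whereas you spell out the equivalent additive form $S(AC)+S(CD)\ge S(C)+S(ACD)$.
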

\begin{proof}
Let $\rho_{ABCD}$ be a purification of $\rho_{ABC}$, i.e., $\rho_{ABCD}$ is rank-one and $\tr_{D}(\rho_{ABCD}) = \rho_{ABC}$. We have, using the fact that $\rho_{ABCD}$ is pure,
\begin{align*}
S(A|B)_{\rho} 
= S(AB)_{\rho} - S(B)_{\rho} 
= S(CD)_{\rho} - S(ACD)_{\rho} 
= - S(A|CD)_{\rho} 
\ge-S(A|C)_{\rho},
\end{align*}
where we used strong subadditivity, i.e., the property that $S(A|C) \geq S(A|CD)$.
\end{proof}
There are two important features of this inequality that we want to highlight: the first one is that it does not involve the global state of $\rho$ on $ABC$, but only the marginals of $\rho$ on $AB$ and $AC$. The second important aspect is that the inequality \eqref{eq:weakMono} defines a convex region in the space of marginals $(\rho_{AB},\rho_{AC})$. This is a consequence of the concavity of the conditional entropy function. 
For these reasons, the inequality can be used to strengthen the semidefinite relaxations $\hat{\cC}$ defined earlier, as follows. For example, the set of 2-body locally consistent marginals $\Loc{2}$ can be strengthened by adding the following scalar inequalities:
\begin{equation}
    \label{eq:wm2body}
    S(i|j)_{\rho} + S(i|k)_{\rho} \geq 0 \quad \forall 1\leq i,j,k\leq n \text{ distinct}.
\end{equation}

In general, if one considers relaxations involving marginals on $l \geq 2$ sites, one can include all weak monotonicity inequalities \eqref{eq:weakMono} for all disjoint sets $A,B,C \subset V$ such that $|AB| \leq l$ and $|AC| \leq l$. The next lemma shows that it is indeed sufficient to consider only inequalities where $|A|=1$, and $|B|=|C|=l-1$, i.e.,
\begin{equation}
\begin{aligned}
\WM{l} = \Bigl\{ (\rho_{ij})_{ij \in E} : & \; \exists (\rho_{S})_{|S| \leq l}, \;\; \rho_{S} \geq 0, \tr \rho_{S}=1 \; \forall S \in \binom{V}{l}\\
& \; \text{ and } \tr_{S\setminus S'} \rho_{S} = \tr_{S' \setminus S} \rho_{S'} \;\; \forall S,S' \in \binom{V}{l} \\
&\; \text{ and } S(A|B)+S(A|C)\ge0\;\forall |A| = 1, B,C\in\binom{V\setminus A}{l-1} \text{ disjoint}\Bigr\}.
\end{aligned}
\end{equation}

\begin{lemma}
Let $|V|\ge 2l-1$ and $(\rho_S)_{S \in \binom{V}{l}}$ be a set of locally consistent marginals satisfying weak monotonicity $S(A|B)_{\rho}+S(A|C)_{\rho}\ge0$ for any disjoint $A$, $B$ and $C$ with $|A|=1$ and $|B|=|C|=l-1$. Then the same weak monotonicity inequality holds for any disjoint sets $A$, $B$ and $C$ of any size (as long as it is defined, i.e., $|AB|,|AC|\le l$).
\end{lemma}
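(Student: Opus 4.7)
The plan is to prove the lemma in two reductions: first, extend the hypothesized ``basic'' weak monotonicity (with $|A|=1$ and $|B|=|C|=l-1$) to arbitrary sizes $|B|,|C|\le l-1$ while keeping $|A|=1$; then, use the chain rule with a careful choice of ordering to reduce general $|A|\ge 1$ back to the singleton case.

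For the first reduction, suppose $A,B,C$ are disjoint with $|A|=1$ and $|B|,|C|\le l-1$. Since $|V|\ge 2l-1=1+(l-1)+(l-1)$, I can pick supersets $B'\supseteq B$ and $C'\supseteq C$ with $|B'|=|C'|=l-1$ such that $A,B',C'$ remain pairwise disjoint. Strong subadditivity, applied via the locally consistent marginals on the $l$-site subsets $A\cup B'$ and $A\cup C'$, gives $S(A|B)\ge S(A|B')$ and $S(A|C)\ge S(A|C')$. Adding these and invoking the hypothesized basic weak monotonicity $S(A|B')+S(A|C')\ge 0$ yields the claim.

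For the second reduction, write $A=\{a_1,\dots,a_k\}$ with $k=|A|$ and expand the two entropies by chain rules run in \emph{opposite} orderings:
\[
S(A|B)=\sum_{i=1}^{k}S(a_i\,|\,a_1\cdots a_{i-1}B),\qquad S(A|C)=\sum_{i=1}^{k}S(a_i\,|\,a_{i+1}\cdots a_k C).
\]
Pairing the $i$-th summands, the conditioning sets $a_1\cdots a_{i-1}B$ and $a_{i+1}\cdots a_k C$ are disjoint from each other and from $a_i$, and their sizes are at most $(k-1)+|B|\le l-1$ and $(k-1)+|C|\le l-1$ respectively, using $|AB|,|AC|\le l$. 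Applying the first reduction to each pair gives $S(a_i\,|\,a_1\cdots a_{i-1}B)+S(a_i\,|\,a_{i+1}\cdots a_k C)\ge 0$ for every $i$, and summing over $i$ yields the desired inequality.

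The main subtlety is the use of opposite orderings. A naive single-ordering chain rule would pair $S(a_i\,|\,a_1\cdots a_{i-1}B)$ with $S(a_i\,|\,a_1\cdots a_{i-1}C)$, whose conditioning sets overlap in $a_1\cdots a_{i-1}$, so weak monotonicity (which requires disjoint systems) could not be invoked directly. Reversing the ordering on one side places the ``carried'' $a_j$'s on opposite sides, restoring disjointness and making the reduction to the $|A|=1$ case work.
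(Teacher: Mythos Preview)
Your proof is correct and follows essentially the same approach as the paper: first reduce to $|A|=1$ with arbitrary $|B|,|C|\le l-1$ via strong subadditivity (the paper phrases this as the ``data-processing inequality for conditional entropy''), then handle general $|A|$ by applying the chain rule in opposite orderings so that the paired conditioning sets are disjoint. Your explicit use of $|V|\ge 2l-1$ to enlarge $B,C$ to disjoint supersets of size $l-1$, and your closing remark on why the opposite orderings are essential, spell out details that the paper leaves implicit.
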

\begin{proof}
For $|A|=1$ and $B, C$ of any size at most $l-1$, the proof directly follows from the data-processing inequality for conditional entropy.
If $|A| = m > 1$, let the elements of $A$ be $A_1 \dots A_m$ with $A_i \in V$ and use the chain rule to \edit{expand $S(A|B)$ and $S(A|C)$ in a telescoping sum that cancels all but the first and last terms}
\begin{align*}
S(A|B)_{\rho} + S(A|C)_{\rho} 
&= \sum_{i=1}^m S(A_i | A_1 \dots A_{i-1} B)_{\rho} + \sum_{i=m}^1 S(A_i | A_{i+1} \dots A_{m} C)_{\rho} \\
&= \sum_{i=1}^m S(A_i | A_1 \dots A_{i-1} B)_{\rho} + S(A_i | A_{i+1} \dots A_{m} C)_{\rho}.
\end{align*}
Using the fact that $|AB| \leq l$, we have that $|A_1 \dots A_{i-1} B| \leq l-1$ for any $i \in \{1, \dots, m\}$ and similarly for $C$, together with the fact that $A_1 \dots A_{i-1} B$ and $A_{i+1} \dots A_{m} C$ are disjoint we obtain that $S(A|B)_{\rho} + S(A|C)_{\rho} \geq 0$.
\end{proof}
\edit{We remark that the Lemma still holds if the marginal relaxation is not defined for \emph{all} $l$-site subsets, as long as it contains the marginals and entropy inequalities that appear in the proof.}

Notice that if $\rho_{AB}$ and $\rho_{AC}$ are classical probability distributions, or more generally if $\rho_{AB}$ and $\rho_{AC}$ are separable quantum states, then conditional entropies are nonnegative and thereby Eq.~\eqref{eq:weakMono} is automatically satisfied, even without assuming the existence of a global state $\rho_{ABC}$. As a result, imposing the weak monotonicity inequality is useless for classical Hamiltonians. Furthermore, we note that the weak monotonicity inequalities at level $l$ are automatically implied by the level $2l-1$ local consistency relaxation, i.e.,
\begin{align}
\label{eq:sandwichWM}
\Loc{2l-1} \subset \WM{l} \subset \Loc{l}.
\end{align}
This is simply because the level $2l-1$ ensures that the marginals are consistent with a valid quantum state on $ABC$, for which weak monotonicity is known to hold.

While this means weak monotonicity constraints cannot help more than doubling the number of sites in the marginal relaxation, it should be noted that this corresponds to squaring the size of each variable in the semidefinite program.

\paragraph{Graphical illustration} We end this section by graphically illustrating the effect of the entropy constraints. Consider the problem of characterizing 
 the $\{1,2\}$ and $\{2,3\}$-body marginals of a 3-site qubit state $\rho = \rho_{123}$, i.e.,
\[
\left\{ (\rho_{12},\rho_{23}) : \rho_{ij} = \tr_{\setminus \{i,j\}} \rho_{123} \text{ for some } \rho_{123} \in \dens((\CC^2)^{\otimes 3}) \right\}.
\]
Figure \ref{fig:3siteproj1} shows a particular two-dimensional projection of this set on Bell states, namely 
\[
(\rho_{12},\rho_{23})\mapsto \left( \tr[(\rho_{12}+\rho_{23})/2 \ket{\psi^{-}} \bra{\psi^{-}}] , \tr[(\rho_{12}+\rho_{23})/2 \ket{\phi^{+}} \bra{\phi^{+}}] \right).\]
\begin{figure}[ht]
\centering
%
%
\begin{tikzpicture}

\begin{axis}[%
width=4.22in,
height=3.33in,
scale only axis,
xmin=0,
xmax=1,
xlabel style={font=\color{white!15!black}},
xlabel={$\tr[(\rho_{12}+\rho_{23})/2 \ket{\psi^{-}} \bra{\psi^{-}}]$},
ymin=0,
ymax=1,
ylabel style={font=\color{white!15!black}},
ylabel={$\tr[(\rho_{12}+\rho_{23})/2 \ket{\phi^{+}} \bra{\phi^{+}}]$},
axis background/.style={fill=white},
axis x line*=bottom,
axis y line*=left,
legend pos=north east
]
\addplot [color=blue,ultra thick]
  table[row sep=crcr]{%
0.0991069377654501	0.749315757154563\\
0.111909007604277	0.747698479629001\\
0.127144600658905	0.74479213481698\\
0.145469978262456	0.740086977561128\\
0.167722531390783	0.732856684758266\\
0.194936279594035	0.72208199962023\\
0.22830343552312	0.706380607874364\\
0.269009403351175	0.68400231268888\\
0.31783704619191	0.653015348288555\\
0.374482269624952	0.611860184456084\\
0.436781181806248	0.560322042143896\\
0.500492825947167	0.500492825947172\\
0.560322042143902	0.436781181806241\\
0.611860184456079	0.374482269624959\\
0.653015348288555	0.31783704619191\\
0.684002312688881	0.269009403351173\\
0.706380607874363	0.228303435523118\\
0.722081999620229	0.194936279594035\\
0.732856684758267	0.167722531390778\\
0.740086977561128	0.145469978262458\\
0.744792134816979	0.127144600658909\\
0.747698479629002	0.111909007604272\\
0.749315757154564	0.0991069377654483\\
0.749999991824368	0.0882313391225154\\
0.749999991824368	0.0788919776441155\\
0.749490127402386	0.070787914724626\\
0.7485929323747	0.0636858847873167\\
0.747394559310464	0.0574037931859319\\
0.745955335683692	0.0517983872489154\\
0.744316983092957	0.0467560564523239\\
0.74250757461734	0.0421860123144704\\
0.740544932425864	0.0380151853523838\\
0.73843895396709	0.0341844247569529\\
0.736193166931003	0.030645630640894\\
0.733805751960797	0.0273596358392405\\
0.731270123799752	0.0242945879146861\\
0.72857520906742	0.0214247952058555\\
0.725705421504879	0.0187298853062101\\
0.722640373148413	0.0161942567878572\\
0.719354388644854	0.0138068492996553\\
0.715815593378391	0.0115610615334996\\
0.711984832606565	0.00945508297775088\\
0.707814004291128	0.0074924401494374\\
0.703243700901181	0.00568292902865442\\
0.698201614148828	0.00404465573269996\\
0.692596210231749	0.0026054326245597\\
0.686314123983745	0.00140706058153611\\
0.679212092886106	0.000509865407266316\\
0.671108033563963	1.21160990602408e-09\\
-8.40221161014772e-09	1.21161012186889e-09\\
-2.51087605917442e-09	8.40958705999804e-10\\
-2.23525996976864e-09	8.06140296735426e-10\\
-1.51124483191324e-09	6.68027022628826e-10\\
-7.61507773937612e-10	4.75527264390195e-10\\
-3.92572834563462e-10	3.55653035959041e-10\\
-2.3945895998368e-10	2.95030964477363e-10\\
-1.46573494889877e-10	2.51322382360433e-10\\
-8.41132203357877e-11	2.16984555847084e-10\\
-2.81916277568827e-11	1.8149563404711e-10\\
3.24743236523248e-12	1.58653819827875e-10\\
5.55372028382537e-11	1.15395959654205e-10\\
8.64084387717083e-11	8.64059394808522e-11\\
1.15394038292873e-10	5.5539411063842e-11\\
1.58654748401907e-10	3.24619561075405e-12\\
1.81495426799923e-10	-2.81913011875721e-11\\
2.16984541043714e-10	-8.41131970094094e-11\\
2.51322567146793e-10	-1.46573834615871e-10\\
2.95031220445655e-10	-2.39459450978977e-10\\
3.55652937293659e-10	-3.92572429856345e-10\\
4.75527102301823e-10	-7.61507174034601e-10\\
6.68026887900817e-10	-1.51124433857181e-09\\
8.06140399670899e-10	-2.23526072230282e-09\\
8.40958749266911e-10	-2.51087633938484e-09\\
1.21161006310088e-09	-8.40221026855646e-09\\
1.21160993982033e-09	0.671108033563964\\
0.000509865407266344	0.679212092886106\\
0.00140706058153628	0.686314123983747\\
0.00260543262455948	0.692596210231748\\
0.00404465573269996	0.698201614148828\\
0.00568292902865397	0.703243700901179\\
0.00749244014943873	0.70781400429113\\
0.00945508297775044	0.711984832606564\\
0.0115610615334987	0.715815593378389\\
0.0138068463934848	0.719354384065462\\
0.0161942602450642	0.722640377327459\\
0.0187298853062074	0.725705421504876\\
0.0214247952058564	0.728575209067422\\
0.0242945879146808	0.731270123799748\\
0.0273596358392449	0.733805751960801\\
0.0306456306408904	0.736193166931001\\
0.0341844247569565	0.738438953967091\\
0.0380151846053298	0.74054493201517\\
0.0421860131071075	0.742507574931167\\
0.046756057076669	0.744316983340154\\
0.051798386598513	0.745955335516698\\
0.0574037931859284	0.747394559310464\\
0.0636858847873167	0.7485929323747\\
0.0707879147246224	0.749490127402386\\
0.0788919776441173	0.749999991824368\\
0.08823133912251	0.749999991824368\\
0.0991069377654501	0.749315757154563\\
};\addlegendentry{$\cC$}
\addplot [color=red,ultra thick]
  table[row sep=crcr]{%
0.0742550014279857	0.810229425476812\\
0.0829913026167119	0.809125773977722\\
0.0931069834340281	0.807196104659432\\
0.104994404473642	0.804143933699143\\
0.119203175911171	0.799527223999818\\
0.136525984902553	0.792668638729077\\
0.158145378716616	0.782495324218718\\
0.18590507865483	0.767234300034209\\
0.222851770214668	0.743787216589925\\
0.274488116709335	0.706271214870728\\
0.352733056182354	0.641541371530384\\
0.501675297335939	0.501675297330627\\
0.641541371527247	0.35273305618604\\
0.706271214871147	0.274488116708721\\
0.743787216591429	0.222851770212561\\
0.767234300023179	0.185905078672475\\
0.782495324222354	0.158145378707582\\
0.792668638731131	0.136525984896879\\
0.799527224076609	0.119203175716734\\
0.804143933648574	0.104994404671178\\
0.807196104659004	0.0931069834362841\\
0.809125774182417	0.0829913015436752\\
0.810229425407608	0.0742550025230759\\
0.810710355448321	0.066610838424209\\
0.810710355448321	0.059844789328098\\
0.810329676472085	0.0537940702692072\\
0.809639777660063	0.0483329588435932\\
0.808691780739363	0.0433633849509558\\
0.807522151898429	0.0388079813798079\\
0.80615664716156	0.0346053899312118\\
0.8046130686536	0.0307067556500762\\
0.802903369892405	0.0270734608390271\\
0.801035330570047	0.0236755099259991\\
0.799013906052618	0.020490254568573\\
0.796842788629903	0.0175019678007287\\
0.794526332529097	0.0147018533860157\\
0.792071576142811	0.0120878033402745\\
0.789492045018418	0.00966546237876731\\
0.78681272931414	0.00744893966221483\\
0.784077624864254	0.00546176996083547\\
0.781360031507556	0.00373713277379695\\
0.778774758388259	0.00231586684927398\\
0.776484942697088	0.00123836137445044\\
0.774686877502663	0.000526457002333336\\
0.773538984971256	0.000153484109720159\\
0.773030611894814	2.29560889040137e-05\\
0.77291513376665	9.27457823762801e-07\\
0.772907805626568	1.69844198427693e-09\\
0.772907792057641	8.44757497462467e-10\\
7.63867349607516e-08	8.44757746048548e-10\\
-5.8393833631073e-08	9.3244323712609e-09\\
1.97825195577512e-10	1.92258452956372e-09\\
2.13345049841552e-09	1.55334425537496e-09\\
1.98482365479103e-09	1.59150514279693e-09\\
-1.37682673521243e-09	2.68377156635809e-09\\
4.36597737138061e-09	4.10034571519324e-10\\
1.17121803913783e-08	-3.04682617222516e-09\\
3.46177252476129e-09	1.48887393493537e-09\\
2.93479443271933e-09	1.82330440152819e-09\\
2.46232507196051e-09	2.16657348529899e-09\\
3.04227505375737e-09	1.68679713529119e-09\\
2.34362928160185e-09	2.34286918477035e-09\\
1.71517941356962e-09	3.01210032129107e-09\\
2.16750570994905e-09	2.4653317543385e-09\\
1.80199605456932e-09	2.9684126357608e-09\\
1.48843476877866e-09	3.46250616081446e-09\\
-3.04445415529872e-09	1.17078005043864e-08\\
4.10485916493944e-10	4.36567912005142e-09\\
2.68411776058264e-09	-1.37685940606304e-09\\
1.59175466180707e-09	1.98508851965629e-09\\
1.55379962895098e-09	2.13291361268116e-09\\
1.92393808411608e-09	1.92579880276114e-10\\
9.32381202335136e-09	-5.83834538994838e-08\\
8.447588658069e-10	7.63872367466246e-08\\
8.44758723826056e-10	0.77290779205792\\
1.69840896901974e-09	0.772907805626302\\
9.27457850352642e-07	0.772915133766856\\
2.29560870567136e-05	0.773030611885196\\
0.000153484109623792	0.773538984968457\\
0.000526457006680525	0.77468687751354\\
0.00123836137485633	0.77648494269801\\
0.00231586685919671	0.778774758409405\\
0.00373713278985566	0.781360031539863\\
0.00546176992433312	0.784077624813738\\
0.00744881221829541	0.786812553902361\\
0.00966561080454387	0.789492203075818\\
0.0120878033353611	0.792071576137579\\
0.0147018534097727	0.794526332550789\\
0.0175019677937316	0.796842788626153\\
0.0204902546020982	0.799013906078309\\
0.0236755098729766	0.801035330540811\\
0.0270734608397287	0.802903369892706\\
0.0307067556502663	0.804613068653659\\
0.0346053899333878	0.806156647162406\\
0.0388079813774596	0.807522151897805\\
0.0433633849515047	0.808691780739483\\
0.0483329588280359	0.809639777657111\\
0.0537940702976965	0.810329676474698\\
0.0598447893165552	0.810710355448414\\
0.0666108384275965	0.810710355448414\\
0.0742550014279857	0.810229425476812\\
};\addlegendentry{$\WM{2}$}
\addplot [color=black,ultra thick]
  table[row sep=crcr]{%
0	1\\
1	0\\
};\addlegendentry{$\Loc{2}$}
\addplot [color=black,ultra thick,forget plot]
  table[row sep=crcr]{%
0	1\\
0	0\\
};
\addplot [color=black,ultra thick,forget plot]
  table[row sep=crcr]{%
0	0\\
1	0\\
};
\addplot [dash pattern=on 3pt off 2pt, color=orange, ultra thick]
   table[row sep=crcr]{%
 0.75	0\\
 0.75	0.25\\
 };\addlegendentry{linear (Eq.~\eqref{eq:starBound})}
\end{axis}

\end{tikzpicture}%
\caption{\label{fig:3siteproj1}We compare the sets of valid states on 3 sites (blue, inner curved line), 2-body marginal relaxations (black triangle), and the entropy constrained marginals fulfilling $S(2|1) + S(2|3) \geq 0$ (red, outer curved line). We also depict the linear inequality mentioned in Eq.~\eqref{eq:starBound} (orange, dashed).
}
\end{figure}

\subsection{Quantitative advantage}
\label{sec:quant-advantage}

In this section we analyze quantitatively the advantage that entropy constraints can provide for estimating the ground energy of certain local Hamiltonians.

\paragraph{A Hamiltonian on 3 sites} We start by looking at a Hamiltonian defined on a 3-node graph given by
\[
H = h_{12} + h_{23},
\]
where
\[
h_{12} = -\frac12 \ketbra{\psi^-}{\psi^-}_{12} \qquad h_{23} = -\frac12 \ketbra{\psi^-}{\psi^-}_{23}
\]
and $\ket{\psi^-}=(\ket{01}-\ket{10})/\sqrt2$. The ground energy of this Hamiltonian is $\lambda_{\min}(H) = -3/4$. Using the relaxation $\Loc{2}$, we get the (trivial) lower bound $-1$: indeed a valid locally consistent assignment of 2-body marginals is 
\begin{equation}
    \label{eq:rhoijbadassignment}
    \rho_{12}=\rho_{23}=\ketbra{\psi^-}{\psi^-}
\end{equation}
as all 1-body marginals are consistently equal to the maximally mixed state $\id/2$. However, this violates the monogamy of entanglement property as for a valid global state, $2$ cannot be maximally entangled with $1$ and $3$.
This violation of monogamy is to some extent captured by entropy constraints coming from weak monotonicity. Indeed, the assignment \eqref{eq:rhoijbadassignment} violates the inequality $S(2|1)_{\rho}+S(2|3)_{\rho}\ge0$ since the left-hand side in this case is equal to $-2$. In fact, by optimizing the Hamiltonian $H$ over the relaxation $\WM{2}$ we get a value that is approximately $-0.811 > -1$, 
i.e.,
\begin{equation}\label{eq:singleEdgeOpt}
\min_{(\rho_{ij}) \in \WM{2}} \tr[h_{12} \rho_{12}] + \tr[h_{23} \rho_{23}] \geq -0.811.
\end{equation}
Graphically this value can be seen in Figure \ref{fig:3siteproj1} as the $x$-component of the rightmost point of the red convex set, whereas the true value $-0.75$ is for the blue convex set.

\begin{remark}[Exploiting symmetries] In the above example, one can actually use the symmetries of the Hamiltonian to simplify the entropy constraints.
As the Hamiltonian commutes with the unitary operation of exchanging subsystems 1 and 3, the minimizer can be chosen to obey the same symmetry.
For that state we have $S(2|1)=S(2|3)$, and so the weak monotonicity inequality becomes simply $S(2|1) \geq 0$. We will see more examples of similar arguments when we consider translation-invariant Hamiltonians in Section \ref{sec:inf}.
\end{remark}

\paragraph{Larger graphs}
\edit{We now show that the advantage of entropy constraints extends to Hamiltonians on arbitrary graphs.} Consider a Hamiltonian defined on a general connected graph $G=(V,E)$ of the form
\begin{equation}
    \label{eq:Hepr}
    H = \sum_{ij \in E} h_{ij} \qquad h_{ij} = -\ketbra{\psi^-}{\psi^-}_{ij} \;\; \forall ij \in E.
\end{equation}
While the simple relaxation $\Loc{2}$ yields a trivial lower bound of $-|E|$ on $\lambda_{\min}(H)$, one can show that the relaxation incorporating the weak monotonicity constraints $S(i|j) + S(i|k) \geq 0$ will have a value $\gtrsim -0.811 |E|$. More precisely, we prove:
\begin{theorem}
\label{thm:EPRcst}
For the Hamiltonian \eqref{eq:Hepr} defined on a connected graph $G=(V,E)$, we have
\[
-|E|=\min_{(\rho_{e})\in \LocE }\sum_{e\in E} \Tr[h_e\rho_e]\le-0.811|E-1|-1\le\min_{(\rho_{e})\in \WME }\sum_{e\in E}\Tr[h_e\rho_e].
\]
\end{theorem}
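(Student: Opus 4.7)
The plan is to verify each link of the displayed chain separately. The first identity $-|E|=\min_{\LocE}\sum_e\tr[h_e\rho_e]$ is immediate: since $h_e=-\ketbra{\psi^-}{\psi^-}\succeq -\id$, each $\rho_e$ satisfies $\tr[h_e\rho_e]\ge -1$, and the sum is therefore $\ge -|E|$; conversely the assignment $\rho_{ij}=\ketbra{\psi^-}{\psi^-}$ for every $ij\in E$ has all one-body reductions equal to $\id/2$, hence lies in $\LocE$, and achieves $-|E|$. The middle numerical step $-|E|\le -0.811(|E|-1)-1$ rearranges to $(1-0.811)(|E|-1)\ge 0$, which holds because $|E|\ge 1$.

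The substantive content is the lower bound $\min_{\WME}\sum_e\tr[h_e\rho_e]\ge -0.811(|E|-1)-1$. I would derive it from a \emph{pair bound} combined with a \emph{combinatorial edge-pairing lemma}. The pair bound says that for any two adjacent edges $e_1=(a,b)$, $e_2=(a,c)$ and any density matrices $\rho_{e_1},\rho_{e_2}$ satisfying $\tr_b\rho_{e_1}=\tr_c\rho_{e_2}$ and $S(a|b)_\rho+S(a|c)_\rho\ge 0$, one has $\tr[h_{e_1}\rho_{e_1}]+\tr[h_{e_2}\rho_{e_2}]\ge -2\cdot 0.811$. This is exactly the three-site optimization \eqref{eq:singleEdgeOpt}, rescaled by the factor $2$ that relates $h_{ij}=-\ketbra{\psi^-}{\psi^-}$ in the current theorem to the $-\tfrac12\ketbra{\psi^-}{\psi^-}$ used there; both constraints invoked are part of the definition of $\WME$ for every adjacent pair of edges in $G$.

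The edge-pairing lemma states that any connected graph $G=(V,E)$ with $m=|E|\ge 1$ admits a partition of $E$ into $\lfloor m/2\rfloor$ pairs of edges sharing a common vertex, together with at most one leftover edge (present iff $m$ is odd). Equivalently, the line graph $L(G)$ has a near-perfect matching. The cleanest route is the classical theorem of Sumner and Las Vergnas that every connected claw-free graph of even order admits a perfect matching: since $L(G)$ is always claw-free and is connected whenever $G$ is, one applies the theorem directly when $m$ is even; when $m$ is odd, first delete a non-cut vertex of $L(G)$ to reduce to a connected claw-free graph of even order, then add the deleted vertex back. The boundary cases $m\le 2$ are trivial.

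Combining the two ingredients, summing the pair bound over the $\lfloor m/2\rfloor$ pairs and the trivial bound $\tr[h_e\rho_e]\ge -1$ over the possible leftover yields $-0.811\,m$ when $m$ is even and exactly $-0.811(m-1)-1$ when $m$ is odd; using $0.811<1$ in the even case, both values are $\ge -0.811(|E|-1)-1$. The main obstacle is the combinatorial edge-pairing lemma: the Sumner/Las Vergnas route makes it short but depends on a non-trivial graph-theoretic result. A self-contained alternative by strong induction on $m$ requires carefully tracking a ``dangling unpaired edge'' invariant while processing a spanning tree from its leaves and assigning each non-tree edge to an endpoint for local processing.
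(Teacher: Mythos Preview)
Your proposal is correct and follows the same overall strategy as the paper: decompose the edge set into disjoint pairs of adjacent edges (plus at most one leftover), apply the three-site entropy-constrained bound \eqref{eq:singleEdgeOpt} to each pair, and use the trivial bound $\tr[h_e\rho_e]\ge -1$ on the leftover. The only genuine difference is in how you establish the edge-pairing lemma. The paper states it as a separate proposition and proves it in an appendix by an elementary iterative argument: starting from any partial pairing, one repeatedly locates two unmatched edges at minimal distance and re-routes pairs along a shortest connecting path until the distance drops to zero. Your route---observing that the desired pairing is a near-perfect matching in the line graph $L(G)$, which is connected and claw-free, and then invoking the Sumner--Las Vergnas theorem (deleting a non-cut vertex first in the odd case)---is a valid and conceptually clean alternative. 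It trades the paper's fully self-contained argument for reliance on a nontrivial external result; the paper's proof is longer but requires no outside machinery.
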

The key to proving this theorem is the following proposition, which allows us to decompose the edges $E$ of the graph into disjoint pairs of adjacent edges.
\begin{proposition}\label{prop:adjEdge}\sloppy
For any connected graph $G=(V,E)$, there exists a set $P = \{\{f_1,g_1\},\ldots, \{f_m,g_m\}\}$ of disjoint pairs of adjacent edges (i.e., $f_i \in E$ and $g_i \in E$ share a node for each $i$, and $\{f_i,g_i\}\cap \{f_j,g_j\} = \emptyset$ for all $i\neq j$) that covers all edges of the graph for an even number of edges and all but one in the case of an odd number of edges.
\end{proposition}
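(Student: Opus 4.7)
The plan is to construct $P$ greedily using a depth-first search spanning tree of $G$. Fix a DFS spanning tree $T$ of $G$ rooted at an arbitrary vertex $r$; by the DFS property, every non-tree edge of $G$ is a back edge, meaning it connects a vertex with one of its ancestors in $T$. I would then process the vertices of $T$ in post-order, so that every child is processed strictly before its parent, maintaining for each vertex $v$ the set $A_v$ of edges incident to $v$ that have not yet been placed in any pair.

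When it is time to process $v$, the set $A_v$ consists of three kinds of edges, all of which share the vertex $v$ and are therefore pairwise adjacent: the tree edges $(v,c)$ inherited from those children $c$ of $v$ that were left unpaired when $c$ was processed; the back edges from $v$ to ancestors of $v$ in $T$, which have not been touched anywhere earlier in the traversal; and, if $v\neq r$, the tree edge $e_v$ from $v$ to the parent of $v$. I pair up the edges of $A_v$ arbitrarily at $v$, with one rule: if $|A_v|$ is odd and $v\neq r$ I deliberately leave $e_v$ unpaired so that it reappears in $A_{\mathrm{parent}(v)}$ at the next step, and if $|A_v|$ is odd and $v=r$ exactly one edge is left unpaired at the end.

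Correctness follows from a routine induction on the subtree rooted at $v$: after processing the whole subtree, every edge of $G$ with at least one endpoint in the subtree has been placed into a pair of $P$, with the possible exception of $e_v$ itself. Applied at $v=r$ this shows that every edge of $G$ has been paired except possibly one, and a parity check then finishes the argument, since the number of paired edges is $2|P|$ and hence even, forcing the number of leftover edges to equal $|E|\bmod 2$, that is, zero when $|E|$ is even and exactly one when $|E|$ is odd.

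The one point that needs care, and which I regard as the main potential obstacle, is the bookkeeping that ensures each edge of $G$ is considered at exactly one vertex. The DFS tree is what makes this clean: each non-tree edge is a back edge and so has a uniquely determined lower endpoint, at which it is placed into $A_v$ and paired off; no earlier step has touched it, and all other elements of $A_v$ share the vertex $v$ with it, so every pair the algorithm produces consists of two adjacent edges of $G$, as required.
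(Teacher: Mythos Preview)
Your proof is correct and takes a genuinely different route from the paper's argument. The paper proceeds by an iterative swapping procedure: starting from an arbitrary partial matching $P$, it locates two unmatched edges $f,g$ of minimal distance, and if they are not already adjacent it walks along a shortest path $f_1,\ldots,f_l$ from $f$ to $g$, swaps $f$ with the partner of $f_1$ inside $P$, and thereby strictly decreases the minimal distance between unmatched edges; the double induction on the number of unmatched edges and on this distance terminates with at most one edge left over. Your approach instead fixes the structure up front via a DFS spanning tree and processes vertices in post-order, greedily pairing at each vertex and pushing any parity defect up to the parent through the tree edge $e_v$. The DFS property is exactly what makes the bookkeeping clean, since every non-tree edge is a back edge and can be unambiguously assigned to its lower endpoint. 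What your approach buys is a direct linear-time construction and an explicit invariant that is easy to verify; the paper's argument is closer to an augmenting-path existence proof and does not make the running time or the structure of $P$ as transparent, though it has the advantage of not needing any auxiliary tree.
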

\begin{proof}
    See Appendix \ref{sec:adjEdge}.
\end{proof}
\begin{proof}[Proof of Theorem \ref{thm:EPRcst}]
We use the decomposition of the edge set $P$ from Proposition \ref{prop:adjEdge} and write
\[
H=\sum_{\{f,g\}\in P} (h_f+h_g)+\delta_{odd} h_e
\]
where $\delta_{odd}=1$ if the graph has an odd number of edges with $e$ the unmatched edge and $\delta_{odd}=0$ otherwise.
We estimate
\begin{align*}
\min_{(\rho_{e})\in \WME}\sum_{e\in E}\Tr[h_e \rho_e]
&\ge\left(\sum_{\{f,g\}\in P} \min_{(\rho_{e})\in \WME}\Tr[h_f\rho_f]+\Tr[h_g\rho_g]\right)+\delta_{odd}\\
&\ge-0.811 (|E|-\delta_{odd})-\delta_{odd}
\end{align*}
where we bounded for each term the objective by the optimal value for the entropy constrained value of the 3-node graph from Eq.~\eqref{eq:singleEdgeOpt}.
\end{proof}

We remark that for this problem, we chose a smaller set of marginals and entropy constraints motivated by the structure of the problem rather than simply all possible constraints for the given size of marginals.

These examples are unweighted instances of the quantum version of the Max-Cut problem that has been widely studied in the literature \cite{Gharibian2019, Parekh2021, Anshu2020, Parekh2020, King2022, lee2022optimizing}.
These works consider semidefinite relaxations of this problem and then round the solutions of the semidefinite program to a valid global quantum state. In order to do this, for example in~\cite{Parekh2021}, it is shown that any $(\rho_{12}, \rho_{23})$ coming from the relaxation  $\LocSOS{2}$ satisfies the linear inequality
\begin{equation}\label{eq:starBound}
\frac{1}{2}\left(\Tr[\rho_{12} \ketbra{\psi^-}{\psi^-}_{12}]+\tr[\rho_{23}\ketbra{\psi^-}{\psi^-}_{23}]\right)\le0.75.
\end{equation}
This can be interpreted in terms of Figure~\ref{fig:3siteproj1}: it shows that Eq.~\eqref{eq:starBound} leads to a tight bound when optimizing the linear function defined by $\ketbra{\psi^-}{\psi^-}$ on sites $12$ and $23$. The entropy constraint performs worse in the directions of the Bell states as seen in Figure~\ref{fig:3siteproj1} and in Equation~\eqref{eq:singleEdgeOpt}, but it is better in some other directions.

\paragraph{Larger marginals} We extend the above example to show how nontrivial entropy constraints can be easily constructed for any choice $k \geq 2$ of number of sites considered by the marginal.
Let us consider a $k$-local Hamiltonian, where each term is given by a projector on the GHZ-state $h_e=-\ketbra{\textup{GHZ}}{\textup{GHZ}}$, which is given by $\ket{\textup{GHZ}}=(\ket{0\ldots0}+\ket{1\ldots1})/\sqrt{2}$ and $e$ is a subset of the sites of size $k$.
Again, for the $k$-site relaxation the optimal solution is to assign a projector onto the GHZ-state to every marginal $\rho_e=\ketbra{\textup{GHZ}}{\textup{GHZ}}$, which are all locally consistent.
However, the conditional entropy for the GHZ state $S(1\ldots i|i+1\ldots k)=-1$ is negative. As a result, for two distinct subsets $e$ and $f$ of size $k$ that have a nonempty intersection, if we write $e = AB$ and $f = AC$, the weak monotonicity constraint $S(A|B)+S(A|C)\ge0$ is violated.
For concreteness, let $V=1\ldots (2l-1)$ and consider hyperedges $e=1\ldots l,\;f=l\ldots (2l-1)$ and consider the Hamiltonian $h_e + h_f$. For the relaxation we specify the marginals $\rho_e$, $\rho_f$ subject to local consistency on site $l$ and the weak monotonicity constraint
\[
S(l|1\ldots l-1)+S(l|l+1\ldots 2l-1)\ge0
\]
In fact, due to the symmetry of the problem, this can be reduced to the optimization
\begin{equation}\label{eq:hyperNum}
\min_{\rho_e,\;S(l|1\ldots l-1)\ge0}\Tr[h_e \rho_e].
\end{equation}
For any $l$, this problem is related to the problem at $l=2$
by the local isometry $U_{1\to 1\ldots l-1}=\ketbra{0\ldots0}{0}+\ketbra{1\ldots1}{1}$. Since this local isometry leaves the conditional entropy invariant the optimal value of Eq.~\eqref{eq:hyperNum} does not depend on $l$ and is thereby equal to approximately $-0.811$ as in the previously encountered example Eq.~\eqref{eq:singleEdgeOpt}.

\edit{
\paragraph{Lasserre/SoS-hierarchy} We omit a detailed study of the enhancement of the Lasserre hierarchy mentioned in the introduction via entropy constraints, but comment on some previous results and future directions.
The Lasserre hierarchy at level $l$ provides expectation values on up to $2l$ sites, but only guarantees that they are consistent with a density matrix for up to $l$ sites.
Thereby, it is not possible to define any entropy constraints (involving at least 2 sites) at the first level.
To circumvent the lack of positivity, in~\cite{Parekh2021} a level 1.5 of the Lasserre hierarchy is introduced that strengthens the level 1 by an explicit positivity constraint on the 2-site marginals.
In fact, it is used for the analysis of star graphs including the one in this section, for which it yields an optimal value of 0.862.
Since this is a worse approximation compared to the entropy constrained value we obtain this shows that there is room for improvement using the entropy constraint, which can be defined for the Lasserre 1.5.
It is a future direction to find explicit improvements also at higher levels of the Lasserre hierarchy.
They can be imposed whenever the constraints of the hierarchy enforce the positivity of the marginals involved in the entropy constraints.
}

\section{Markov Entropy Decomposition}\label{sec:med}

In this section we derive another family of entropy constraints that one can impose on the set of globally consistent marginals. We call these inequalities \emph{Markov Entropy Decomposition (MED)} inequalities, because they have appeared in \cite{Poulin2011} in the context of deriving lower bounds on the free energy of a Hamiltonian $H$. The next lemma briefly describes the family of inequalities. We explain the connection to the free energy later.
\begin{lemma}[Markov Entropy Decomposition inequalities]
\label{lem:medIneq}
Let $\rho$ be a density operator acting on the Hilbert space corresponding to the sites $V$. Consider an ordering $1,\ldots,N=|V|$ of the sites in $V$, and for each $i \in \{1,\ldots,N\}$, let $\cN_i \subset \{1, \dots, i-1\}$ be a subset of the sites appearing before $i$ (in the chosen order) with $|\cN_i| \leq l-1$. Then for any $1\leq k \leq N$
\begin{equation}
\label{eq:medIneq}
\sum_{i=1}^k S(i|\cN_i)_{\rho} \geq 0.
\end{equation}
\end{lemma}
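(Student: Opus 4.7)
The proof is a direct combination of the chain rule and strong subadditivity. The plan is to show that the sum $\sum_{i=1}^k S(i|\cN_i)_\rho$ dominates $\sum_{i=1}^k S(i|1,\ldots,i-1)_\rho$ term by term, and then recognize this latter sum as the entropy $S(1,\ldots,k)_\rho$, which is nonnegative.

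First, for each fixed $i \in \{1,\ldots,k\}$, note that $\cN_i \subset \{1,\ldots,i-1\}$. Strong subadditivity (equivalently, monotonicity of conditional entropy under enlarging the conditioning system) yields
\[
S(i|\cN_i)_\rho \;\geq\; S(i|1,\ldots,i-1)_\rho,
\]
since conditioning on the larger set $\{1,\ldots,i-1\}$ can only decrease the conditional entropy. Summing this inequality over $i = 1, \ldots, k$ gives
\[
\sum_{i=1}^k S(i|\cN_i)_\rho \;\geq\; \sum_{i=1}^k S(i|1,\ldots,i-1)_\rho.
\]

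Next, the chain rule for von Neumann entropy, applied telescopically to the ordered sites $1,\ldots,k$, gives
\[
\sum_{i=1}^k S(i|1,\ldots,i-1)_\rho \;=\; S(1,2,\ldots,k)_{\rho},
\]
where the $i=1$ term is interpreted as the unconditional entropy $S(1)_\rho$ (taking $\cN_1 = \emptyset$ in the convention that conditional entropy on the empty set is just the entropy). Since the von Neumann entropy of any density operator is nonnegative, $S(1,\ldots,k)_\rho \geq 0$, and combining the two displayed inequalities yields the claim.

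There is essentially no obstacle: both ingredients (strong subadditivity and the chain rule) are standard, and the only subtlety is to check that one applies strong subadditivity in the correct direction, namely that adding systems to the conditioning side can only decrease the entropy. It is also worth remarking, for use in the convex relaxation, that the resulting constraint $\sum_{i=1}^k S(i|\cN_i)_\rho \geq 0$ involves only marginals on the sets $\{i\} \cup \cN_i$, each of size at most $l$, and is concave in each of these marginals (by concavity of conditional entropy), so it can indeed be added to the $\Loc{l}$-type relaxations without breaking convexity.
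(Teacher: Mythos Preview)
Your proof is correct and follows exactly the same approach as the paper: bound each $S(i|\cN_i)$ below by $S(i|1,\ldots,i-1)$ via strong subadditivity, sum, and recognize the result as $S(1,\ldots,k)\geq 0$ via the chain rule. The additional remark on concavity of the constraint is also accurate and matches the paper's subsequent discussion.
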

\begin{proof}
We have, using the chain rule and strong subadditivity:
\[
0 \leq S(1 \dots k)_{\rho} = S(1)_{\rho} + S(2|1)_{\rho} + S(3|12)_{\rho} + \dots + S(k|1\ldots k-1)_{\rho} \leq \sum_{i=1}^k S(i|\cN_i)_{\rho}.
\]
\end{proof}
We see that the inequalities \eqref{eq:medIneq} only involve the marginals of $\rho$ on at most $l$ sites (since $|\cN_i| \leq l-1$) and that they define a convex region in the space of these marginals. Thus these inequalities can be used to strengthen the relaxations $\Loc{l}$.  We note that there is an MED inequality for each choice of ordering of the sites in $V$, and each choice of \emph{Markov shields} $\cN_i \subset \{1,\ldots, i-1\}$.

\subsection{Relationship with weak monotonicity}

A natural question is whether the MED inequalities are equivalent to the weak monotonicity inequalities derived in the previous section. The answer in general is no, and we show in this section that the two relaxations are incomparable.

However, we first observe that some simple MED inequalities are implied by the weak monotonicity inequalities. For example consider the case where the Markov shields $\cN_i$ consist of the (at most) $l-1$ sites that are immediately preceding the site $i$ in the chosen order, i.e., $\cN_i = \cN_i^{-} = \{\max(1,i-l+1),\ldots,i-1\}$. Then one can show that the MED inequality is implied by the weak monotonicity inequalities. Indeed, let $\cN_i^{+}$ be the (at most) $l-1$ sites that are immediately \emph{following} the site $i$ in the chosen order, i.e., $\cN_i^+ = \{i+1,\ldots,\min(k,i+l-1)\}$. The MED inequality states
\begin{equation}
\label{eq:medorderline}
\sum_{i=1}^k S(i|\cN_i^{-}) \geq 0.    
\end{equation}
One can check that $\sum_{i=1}^k S(i|\cN_i^{-}) = \sum_{i=1}^k S(i|\cN_i^+)$, and so the inequality above can be equivalently written as
\[
\sum_{i=1}^k S(i|\cN_i^{-}) + S(i|\cN_i^+) \geq 0.
\]
Now it suffices to observe that for each $i$, $S(i|\cN_i^{-}) + S(i|\cN_i^+) \geq 0$ is implied by weak monotonicity.

We now give two examples showing that MED and weak monotonicity constraints are in general inequivalent.

\paragraph{Weak monotonicity does not imply MED}

We construct a set of 2-body locally consistent states on $N$ sites (for $N$ large enough) that violate a certain MED inequality but are consistent with all possible weak monotonicity constraints of the form \eqref{eq:wm2body}.
The 2-body marginals are defined as follows (for $i,j \geq 2$):
\begin{align*}
\rho_{1i}&=\lambda |\psi^-\rangle\langle\psi^-|+(1-\lambda)(\id/2\otimes|0\rangle\langle0|)\\
\rho_{ij}&=(\lambda\id/2+(1-\lambda)\ketbra{0}{0})^{\otimes2}.
\end{align*}
Here, $\ket{\psi^-}=(\ket{01}-\ket{10})/\sqrt2$ as before. The entropies of the single-body and two-body marginals are
\[
S(1) = 1 \quad \text{and} \quad S(i) = h(1-\frac\lambda2) \quad \text{and} \quad S(ij) = 2h(1-\frac\lambda2) \quad i,j \geq 2
\]
where $h$ is the binary entropy function, which is concave, continuous and satisfies $h(1)=0$ and $h(1/2)=1$.
We omit an explicit formula for the entropy of the two-body marginal $S(1i)$, but it is also concave, continuous, and equal to 1 if $\lambda=0$, and 0 if $\lambda=1$.

This means we can pick $0<\lambda<1$ such that $S(1|i)+S(1|j)=2 S(1i)-2h(1-\lambda/2)=0$.
Furthermore,
\begin{align*}
S(i|j)+S(i|1)&=h(1-\lambda/2)+S(i1)-1\\
&\ge \lambda + (1-\lambda)-1=0,
\end{align*}
where we applied concavity of $h$ and the entropy function.
Finally, $S(i|j)+S(i|k)\ge0$ for $i,j,k \geq 2$, because both marginals are product states, so all weak monotonicity constraints that can be defined for 2-site marginals are satisfied.

However, let us consider the MED inequality
\[
0 \leq S(1)+\sum_{i=1}^N S(i|1)=1+N (S(1i)-1).
\]
Since $0<S(1i)=h(1-\lambda/2)<1$ ($h$ is strictly decreasing on $[1/2,1]$), we see that the right-hand side becomes negative for sufficiently large $N$ ($N=8$ is sufficient).

\paragraph{MED does not imply weak monotonicity}
We now construct a set of 2-body locally consistent states on $N=3$ sites that satisfy all the MED inequalities but violate some weak monotonicity inequalities.
We choose for all $ij$
\begin{equation}
\label{eq:3siteMEDvsWM}
\rho_{ij}=\lambda\ketbra{\psi^-}{\psi^-}+(1-\lambda)\id/4    
\end{equation}
with $\lambda$ such that $-1/2\le S(i|j)<0$. 
This is possible due to the continuity in $\lambda$ and the fact that it takes the values $1$ and $-1$ for $\lambda=0$ and $\lambda=1$ respectively. 
This choice violates weak monotonicity as all conditional entropies are negative.
Due to the permutation invariance, the only relevant MED constraints to be checked are
\begin{align*}
S(1)+S(2|1)+S(3|2)&\ge0\\
S(1)+S(2|1)+S(3|1)&\ge0.
\end{align*}
These are satisfied since $S(1) = 1$ and each $S(i|j) \geq -1/2$.

\begin{remark}
We note that on three sites, the 2-body MED constraints are all implied by weak monotonicity. Indeed, the MED constraints are all of the form
\[
S(i)+S(j|i) + S(k|j) \geq 0
\]
for all choices of $i,j,k$ distinct. The only other choice $S(i)+S(j|i)+S(k|i)=S(j)+S(i|j)+S(k|i)$ is in fact equivalent. This is precisely of the form \eqref{eq:medorderline} where the sites are ordered as $i<j<k$, which have been shown to be implied by the weak monotonicity constraints. The above example \eqref{eq:3siteMEDvsWM} thus shows that the weak monotonicity constraints are strictly stronger than the MED constraints on 3 sites.
\end{remark}

\subsection{Connection with the free energy}

The expression appearing in the MED inequalities was used in \cite{Poulin2011} to derive lower bounds on the free energy $F(T)$ of the Hamiltonian $H$ on sites $V$ at temperature $T$. Recall that the free energy is given by the following variational formula
\[
F(T) = \min_{\rho} \{ \Tr[H\rho]-T S(V)_{\rho} \},
\]
where the minimization is over all density operators $\rho$ acting on the Hilbert space of all the sites in $V$.
As opposed to the ground energy problem, the objective does not only depend on few-site marginals as the entropy function involves the global density matrix and its spectrum. The entropy however can be decomposed into conditional entropies which can be upper bounded using the strong subadditivity inequality:
Considering an order of the sites $1,\ldots,N$, we write
\begin{align*}
S(V)_{\rho}&=S(1)_{\rho}+S(2|1)_{\rho}+\ldots+S(N|1\ldots N-1)\\
&\le \sum_{i=1}^N S(i|\mathcal N_i)_{\rho}
\end{align*}
again using Markov shields $\mathcal N_i\subset \{1, \dots, i-1\}$ defining the sites that are taken into account in the conditioning.
The intuition in choosing these is that sites that are far away in the hypergraph distance induced by the interaction hypergraph have only small correlation as measured by the conditional mutual information (CMI), see for example \cite{Kato2019} for a result in 1D.
The CMI is also equal to the error made in the approximation using strong subadditivity.
This relaxation can now again be computed by just the marginals on $\{i\}\cup\mathcal N_i$ and the supports of the Hamiltonian terms and therefore allows for a further relaxation as explained in the previous section:
\begin{equation}
    \label{eq:MEDT}
\begin{aligned}
F(T) &=\min_{\rho} \{ \Tr[H\rho]-T S(\rho) \}\\
&\geq \min_{(\rho_{e}) \in \Loc{l}} \left\{ \sum_{e} \tr[h_{e} \rho_{e}] - T\sum_{i=1}^N S(i|\mathcal N_i) \right\} := \MED(T)
\end{aligned}
\end{equation}
Here, $l$ is chosen sufficiently large to define all conditional entropies and interaction terms.

\paragraph{Lower bounds for the ground energy}
It has been pointed out in \cite{Poulin2011} that while the free energy $F(T)$ is a decreasing function of $T$, its $\MED$ approximation $\MED(T)$ need not be.
At every temperature, however, we have $\MED(T)\le F(T)\le F(0)$, which is to say that the MED lower bounds the ground energy $F(0)$. We can maximize this lower bound over $T$ to obtain the best lower bound on the ground energy at that level. It is easy to see that maximizing the expression for $\MED(T)$ in \eqref{eq:MEDT} corresponds to adding the $\MED$ term as an inequality, i.e.,
\[
F(0) \geq \max_T \MED(T) = \min_{\substack{(\rho_{e}) \in \Loc{l}\\\sum_{i=1}^N S(i|\mathcal N_i) \geq 0}} \; \sum_{e} \tr[h_{e} \rho_{e}].
\]

\section{Infinite systems}\label{sec:inf}
A practically relevant case of the above methods are lattice systems. We start this section by discussing the application of weak monotonicity to one-dimensional translation-invariant systems.

\paragraph{One-dimensional systems} We consider a translation-invariant Hamiltonian $H$ acting on a chain of length $m$ of the form
\[
H_{[1,m]} = \sum_{i=1}^m h_{i,i+1}
\]
where the interaction terms $h_{i,i+1}$ act on sites $i$ and $i+1$, and are translates of the same Hermitian operator $h$ acting on two sites $\CC^d \otimes \CC^d$.
We use periodic boundary conditions, i.e., we identify $m+1$ with $1$.
We can define the ground energy per site of the infinite system as the limit 
\begin{equation}
\label{eq:e0h}
e_0(h) = \lim_{m\to\infty}\frac{1}{m} \lambda_{\min}(H_{[1,m]}) = \lim_{m\to\infty} \frac{1}{m} \min_{\rho_m \in \dens(\otimes_{v \in [1,m]} \CC^d)} \tr[H_{[1,m]} \rho_m].
\end{equation}
The existence of the limit above is shown in Appendix~\ref{sec:existence}; in fact we show that in the minimization problems in \eqref{eq:e0h}, it is equivalent to restrict to just a single interaction term while taking $\rho_m$ translation-invariant, i.e., satisfying $\tr_1[\rho_m] = \tr_m[\rho_m]$, since for such translation-invariant states the two-body marginals $\rho_{12}, \ldots, \rho_{m-1,m}$ are all equal. This allows us to express the ground energy density $e_0(h)$ in the following way
\[
e_0(h)=\min_{\rho\in\TI}\Tr[h\rho],
\]
where $\TI$ is the set of two-body density matrices that are extendible to an infinite translation-invariant system, i.e.,
\begin{equation}
\label{eq:C2TI}\TI = \{\rho \in \dens(\CC^d \otimes \CC^d) : \forall m\;\exists \rho_m\in\dens(\otimes_{v\in[1,m]}\CC^d),\;\rho=\Tr_{[3,m]}[\rho_m],\;\Tr_1[\rho_m]=\Tr_m[\rho_m]\}.
\end{equation}
A ``local'' relaxation of this set can be defined in a similar way as discussed previously, and leads to
\begin{equation}
\label{eq:loctil}
\LocTI{l}=\left\{ \rho \in \dens(\CC^d \otimes \CC^d) :\exists \rho_l\in\dens(\otimes_{v\in[1,l]}\mathbb C^d),\;\rho=\Tr_{[3,l]}[\rho_l],\;\Tr_1[\rho_l]=\Tr_l[\rho_l] \right\}.
\end{equation}
We prove in Appendix~\ref{sec:existence} that for any $h$, the value
\[
\min_{\rho \in \LocTI{l}} \tr[h\rho]
\]
converges to $e_0(h)$ as $l\to \infty$ at the rate $O(1/l)$.

Using weak monotonicity and translation-invariance, one can strengthen the relaxation \eqref{eq:loctil} by adding the inequality $S(l|1\ldots l-1) \geq 0$, leading to:
\begin{equation}
\begin{aligned}
\label{eq:WMTIl}
\WMTI{l}=\Big\{ \rho \in \dens(\CC^d \otimes \CC^d) :&\exists \rho_l\in\dens(\otimes_{v\in[1,l]}\mathbb C^d),\\
&\;\rho=\Tr_{[3,l]}[\rho_l],\;\Tr_1[\rho_l]=\Tr_l[\rho_l], \; S(l|1\ldots l-1)\ge0\Big\}.
\end{aligned}
\end{equation}
The next theorem shows that the above is a valid relaxation, and moreover that the resulting relaxation cannot be better than $\LocTI{2l-1}$.

\begin{theorem}
\[
\LocTI{2l-1}\subset \WMTI{l}\subset \LocTI{l}
\]
\end{theorem}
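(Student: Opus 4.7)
The second inclusion $\WMTI{l}\subset \LocTI{l}$ is immediate from the definitions, since $\WMTI{l}$ is obtained from $\LocTI{l}$ by adjoining the extra entropy inequality $S(l|1\ldots l-1)\ge 0$; every state witnessing membership in $\WMTI{l}$ trivially witnesses membership in $\LocTI{l}$.

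For the first inclusion $\LocTI{2l-1}\subset \WMTI{l}$, take $\rho\in\LocTI{2l-1}$ and let $\rho_{2l-1}\in\dens(\otimes_{v\in[1,2l-1]}\CC^d)$ be the translation-invariant extension guaranteed by \eqref{eq:loctil} (applied at level $2l-1$). The marginal $\rho_l := \tr_{[l+1,2l-1]}\rho_{2l-1}$ is an $l$-site state whose outer marginals agree (by translation invariance of $\rho_{2l-1}$), so it witnesses $\rho\in\LocTI{l}$. What remains is to verify the entropy constraint $S(l|1\ldots l-1)_{\rho_l}\ge 0$.

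Apply Lemma~\ref{lem:weakMono} (weak monotonicity) on the tripartition $A=\{l\}$, $B=\{1,\ldots,l-1\}$, $C=\{l+1,\ldots,2l-1\}$ of the $(2l-1)$-site state $\rho_{2l-1}$, which gives
\begin{equation*}
S(l\,|\,1\ldots l-1)_{\rho_{2l-1}} + S(l\,|\,l+1\ldots 2l-1)_{\rho_{2l-1}} \ge 0.
\end{equation*}
Now use translation invariance: the marginal of $\rho_{2l-1}$ on the consecutive sites $\{l,l+1,\ldots,2l-1\}$ equals (after shifting indices by $l-1$) the marginal on $\{1,2,\ldots,l\}$, and in particular $S(l, l+1,\ldots,2l-1) = S(1,\ldots,l)$ and $S(l+1,\ldots,2l-1)=S(1,\ldots,l-1)=S(2,\ldots,l)$. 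Hence the two conditional entropies above are both equal to $S(l|1\ldots l-1)_{\rho_l}$, and the weak monotonicity inequality collapses to $2\,S(l|1\ldots l-1)_{\rho_l}\ge 0$. This establishes $\rho\in\WMTI{l}$.

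The only delicate step is bookkeeping the translation-invariance identification in the last paragraph; everything else is definitional or a direct invocation of Lemma~\ref{lem:weakMono}.
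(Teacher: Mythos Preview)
Your proof is correct and follows essentially the same approach as the paper: both apply weak monotonicity on the tripartition $A=\{l\}$, $B=\{1,\ldots,l-1\}$, $C=\{l+1,\ldots,2l-1\}$ of the $(2l-1)$-site translation-invariant witness, and then use translation invariance to identify the two resulting conditional entropies, collapsing the inequality to $2S(l|1\ldots l-1)\ge 0$. Your version is slightly more explicit about constructing the $l$-site witness $\rho_l$ and verifying its local-consistency condition, but the argument is otherwise identical.
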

\begin{proof}The second inclusion is immediate.
To prove the first one we start from a translation-invariant state on sites $1,\ldots, 2l-1$.
We know from weak monotonicity that $S(l|1\ldots l-1)+S(l|l+1\ldots 2l-1)\ge0$.
However, translation-invariance tells us that
\begin{align*}
S(l|l+1\ldots 2l-1)&=S(l\ldots2l-1)-S(l+1\ldots 2l-1)\\
&=S(1\ldots l)-S(1\ldots l-1)\\
&=S(l|1\ldots l-1),
\end{align*}
so $2S(l|1\ldots l-1)\ge0$, which concludes the proof.
\end{proof}

In fact, the entropy constraint $S(l|1\ldots,l-1) \geq 0$ can as well be derived from the MED inequalities (Lemma \ref{lem:medIneq}) by using an increasing order and a Markov shield equal to the previous $l-1$ sites.

\paragraph{Higher dimensions}
We now consider a lattice system on $\Z^D$ for $D \geq 1$. Let $h$ be a Hamiltonian term on the origin and its nearest neighbours in positive axis directions (i.e., a Hermitian matrix of size $d^{D+1}\times d^{D+1}$), and for $v \in \Z^D$, let $h_v$ be its translate to the site $v \in \Z^D$, so that it acts on $\{v,v+e_1,\ldots,v+e_D\}$. Let us consider Hamiltonians of the form:
\[
H_{[1,m]^D} = \sum_{v \in [1,m]^{D}} h_v,
\]
with periodic boundary conditions, i.e., identifying $m+1$ with $1$. The ground energy density of the system is defined by the limit (see Appendix~\ref{sec:existence} for existence).
\[
e_0(h) = \lim_{m\to \infty} \frac{1}{m^D} \lambda_{\min}(H_{[1,m]^D}).
\]
Using the same considerations as in the 1D case, we can also write $e_0(h)$ as a linear optimization problem over the set of density matrices defined on $\{0,e_1,\ldots,e_D\}$ that are extendible to an infinite translation-invariant system, i.e.,
\[
\TI = \bigcap_{m\in [2,\infty)} \LocTI{[1,m]^D}
\]
where for any finite subset $A \subset \Z^D$ such that $\{0,e_1,\ldots,e_D\} \subset A$ we define 
\[
\begin{aligned}
\LocTI{A} = \Big\{ \rho \in \dens( (\CC^d)^{D+1} ) : &\exists \rho_A \in \dens( \otimes_{v \in A} \CC^d ) \text{ s.t. } \rho=\Tr_{A\setminus\{0,e_1,\ldots,e_D\}}[\rho_A]\\
&\Tr_{A\setminus A+t}[\rho_A] = \tr_{A+t\setminus A}[\tau_t(\rho_A)]\;\forall t\in\Z^D\Big\},
\end{aligned}
\]
where $\tau_t$ defines the translation operator shifting the state by $t$ sites.

We investigate what entropy constraints arising from weak monotonicity and the MED can be imposed to strengthen the relaxation $\LocTI{A}$ of $\TI$. The weak monotonicity inequality allows us to strengthen the relaxation above by adding constraints of the form
\begin{equation}
\label{eq:WMhighD}
S(i|B) + S(i|C) \geq 0
\end{equation}
for any choice of $i\in A$ and $B,C \subset A$ disjoint.

We now consider MED inequalities. 
Consider a translation-invariant state defined on $[-m,m]^D$ for large $m$. We fix an order of the sites $1,\ldots,N=(2m+1)^D$ that decreases in each coordinate direction and iterates over the coordinates (see Figure~\ref{fig:markovShield}). Let $\cN$ be some fixed region of size $l-1$ independent of $m$, which is a subset of the sites preceding the origin (a common choice would be the intersection of a ball $B(0,r)$ of fixed radius $r$ with the sites preceding the origin, see Figure \ref{fig:markovShield}).
We define the Markov shields $\mathcal N_i$  relative to the site $i$ by translating $\cN$ to the site numbered $i$, i.e., abusing notations:
\[
\cN_i = (i+\cN) \cap [-m,m]^D.
\]
\begin{figure}[H]
\centering
\usetikzlibrary{decorations.markings}
\tikzset{->-/.style={decoration={
  markings,
  mark=at position .5 with {\arrow{>}}},postaction={decorate}}}

\begin{tikzpicture}
\fill[orange!80] (0.5,2.5) rectangle(3.5,3.5);
\fill[orange!80] (1.5,3.5) rectangle(2.5,4.5);
\fill[orange!80] (2.5,1.5) rectangle(4.5,2.5);
\fill[blue!50] (1.5,1.5) rectangle(2.5,2.5);
\draw[black,thick] (4,4)--(3,4);
\draw[->-,black,thick] (3,4)--(0,4);
\draw[->-,black] (0,4)--(4,3);
\draw[black,thick] (4,3)--(3,3);
\draw[->-,black,thick] (3,3)--(0,3);
\draw[->-,black] (0,3)--(4,2);
\draw[black,thick] (4,2)--(3,2);
\draw[->-,black,thick] (3,2)--(2,2);
\filldraw[black]{(0,0)} circle (4pt);
\filldraw[black]{(0,1)} circle (4pt);
\filldraw[black]{(0,2)} circle (4pt);
\filldraw[black]{(0,3)} circle (4pt);
\filldraw[black]{(0,4)} circle (4pt);
\filldraw[black]{(1,0)} circle (4pt);
\filldraw[black]{(1,1)} circle (4pt);
\filldraw[black]{(1,2)} circle (4pt);
\filldraw[black]{(1,3)} circle (4pt);
\filldraw[black]{(1,4)} circle (4pt);
\filldraw[black]{(2,0)} circle (4pt);
\filldraw[black]{(2,1)} circle (4pt);
\filldraw[black]{(2,2)} circle (4pt)node[anchor=north east]{ $0$};
\filldraw[black]{(2,3)} circle (4pt)node[anchor=north east]{ $e_2$};
\filldraw[black]{(2,4)} circle (4pt);
\filldraw[black]{(3,0)} circle (4pt);
\filldraw[black]{(3,1)} circle (4pt);
\filldraw[black]{(3,2)} circle (4pt)node[anchor=north east]{ $e_1$};
\filldraw[black]{(3,3)} circle (4pt);
\draw[black]{(2.8,3.2)} circle (0pt) node[anchor=east]{ $\mathcal N$};
\filldraw[black]{(3,4)} circle (4pt);
\filldraw[black]{(4,0)} circle (4pt);
\filldraw[black]{(4,1)} circle (4pt);
\filldraw[black]{(4,2)} circle (4pt);
\filldraw[black]{(4,3)} circle (4pt);
\filldraw[black]{(4,4)} circle (4pt);
\end{tikzpicture}
\caption{\label{fig:markovShield}Markov shield with order of sites in 2D.}
\end{figure}
The MED inequality \eqref{eq:medIneq} in this case takes the form
\[
0 \leq \sum_{i=1}^N S(i|\cN_i).
\]
For sites $i$ that are not close to the boundary of the region $[-m,m]^D$, the regions $\{i\} \cup \cN_i$ are all translates of each other, and so the terms in the equation above are equal, by translation invariance. The number of sites $i$ that are close to the boundary is $o(N)$, and so dividing by $N$ and letting $N\to \infty$, the inequality above yields
\begin{equation}
\label{eq:MEDTIhighD}
0 \leq S(0|\cN)
\end{equation}
for a state in $\TI$.
This suggests the relaxation
\begin{align*}
   \TI\subset\MEDTI{0\cup\mathcal N} = \Big\{ \widetilde{\rho} \in \dens( (\CC^d)^{D+1} ) \ : \ &\exists \rho \in \dens( \otimes_{v \in 0\cup\mathcal N} \CC^d ) \text{ s.t. } 
    \\&\Tr_{(0\cup\cN)\setminus (0\cup\cN)+t}[\rho] = \tr_{(0\cup\cN)+t\setminus (0\cup\cN)}[\tau_t(\rho)]\;\forall t\in\Z^D\\
    &S(0|\mathcal N)\ge0 \Big\}.
\end{align*}

In general, the MED inequality \eqref{eq:MEDTIhighD} is different from the weak monotonicity inequality \eqref{eq:WMhighD}: \eqref{eq:MEDTIhighD} shows the nonnegativity of a particular conditional entropy $S(0|\cN)$ where $\cN$ has to satisfy the conditions of a Markov shield described above (i.e., compatibility with well-chosen order). Weak monotonicity on the other hand, asserts the nonnegativity of the sum of two conditional entropies \eqref{eq:WMhighD} with the only constraint that $B$ and $C$ are disjoint.

Under some mild conditions however, one can recover the MED inequality \eqref{eq:MEDTIhighD} as a consequence of weak monotonicity. This is the case if the state $\rho$ is reflection symmetric, which can be assumed if the Hamiltonian itself is reflection symmetric (i.e.,  $h=\sum_{i=1}^Dh_{0,e_i}^i$ where $h^i_{0,e_i}$ acts on sites $0,$ $e_i$ and is symmetric under exchange of the two sites). One can check that with the order defined earlier, the reflection of $\cN$ about the origin is disjoint from $\cN$, i.e., $-\cN \cap \cN = \emptyset$. Furthermore, by reflection symmetry we have $S(0|\cN) = S(0|-\cN)$ and so the weak monotonicity inequality $S(0|\cN) + S(0|-\cN) \geq 0$ recovers \eqref{eq:MEDTIhighD}. A corollary of this observation is stated in the next theorem.

\begin{theorem}\label{thm:MEDhigherD}
Under the reflection symmetry and translation-invariance assumptions explained above with a Markov shield $\mathcal N$ of size $l-1$ the MED relaxation is no better than a semidefinite optimization in one $d^{2l-1}\times d^{2l-1}$ dimensional variable, more precisely
\[
\min_{\rho\in\LocTI{0\cup\cN}} \Tr[h\rho]\le\min_{\rho\in\MEDTI{0\cup\cN}} \Tr[h\rho]\le\min_{\rho\in\LocTI{0\cup\cN\cup-\cN}} \Tr[h\rho]\le e_0(h).
\]
\end{theorem}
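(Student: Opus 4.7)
The plan is to establish the three inequalities separately. The outer two are essentially structural, while the middle inequality is the core content and is where reflection symmetry combines with weak monotonicity.

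For the leftmost inequality, by construction $\MEDTI{0\cup\cN}$ is obtained from $\LocTI{0\cup\cN}$ by adding the single extra constraint $S(0|\cN)\ge 0$, so $\MEDTI{0\cup\cN}\subset \LocTI{0\cup\cN}$ and minimizing over the smaller set can only increase the optimum. For the rightmost inequality, any $\rho\in\TI$ admits a translation-invariant extension to all of $\Z^D$ whose restriction to $A:=0\cup\cN\cup(-\cN)$ witnesses membership in $\LocTI{A}$; hence $\TI\subset \LocTI{0\cup\cN\cup-\cN}$, which gives the desired bound after taking the infimum of $\Tr[h\cdot]$.

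For the middle inequality, I would take any $\sigma\in\LocTI{A}$, witnessed by a state $\sigma_A\in\dens(\otimes_{v\in A}\CC^d)$. The set $A$ is manifestly invariant under the reflection $R$ through the origin, so $R\sigma_A R^{*}$ is again a witness for $\LocTI{A}$ (the local TI-consistency constraints are stable under this reflection because their index set is reflection-invariant). By convexity, $\sigma_A^{\mathrm{sym}}:=\tfrac12(\sigma_A+R\sigma_A R^{*})\in\LocTI{A}$, and by the reflection symmetry of $h$ its marginal on $\{0,e_1,\ldots,e_D\}$ has the same expectation under $h$ as $\sigma$. Now let $\rho:=\sigma_A^{\mathrm{sym}}\big|_{0\cup\cN}$. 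A direct check from the definition of $\LocTI{\cdot}$ shows local TI-consistency descends to marginals, so $\rho\in\LocTI{0\cup\cN}$. Moreover $\sigma_A^{\mathrm{sym}}$ is a genuine quantum state on the three disjoint subsystems $0$, $\cN$ and $-\cN$ (disjointness of $\cN$ and $-\cN$ follows from the choice of total order: $\cN$ consists of sites strictly preceding the origin, hence $-\cN$ of sites strictly following it). Applying Lemma~\ref{lem:weakMono} gives
\[
S(0|\cN)_{\sigma_A^{\mathrm{sym}}}+S(0|{-\cN})_{\sigma_A^{\mathrm{sym}}}\ge 0,
\]
while reflection symmetry of $\sigma_A^{\mathrm{sym}}$ forces the two conditional entropies to be equal. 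Therefore $S(0|\cN)\ge 0$, which places $\rho$ in $\MEDTI{0\cup\cN}$ with unchanged objective value, proving the middle inequality after taking infima.

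The main obstacle is bookkeeping in this middle step: checking that reflection genuinely preserves the family of local TI-consistency constraints defining $\LocTI{A}$, and that these constraints descend from the marginal on $A$ to the marginal on $0\cup\cN$. Both boil down to the fact that partial trace commutes with the shift operators $\tau_t$ and that $0\cup\cN \subset A$ together with $A = -A$ makes all the required overlaps $(0\cup\cN)\cap (0\cup\cN + t)$ sit inside $A\cap (A+t)$. Once these routine verifications are in hand, the argument is exactly the informal sketch given in the paragraph preceding the theorem.
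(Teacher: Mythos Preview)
Your proof is correct and follows essentially the same route as the paper: symmetrize to obtain a reflection-invariant witness on $0\cup\cN\cup(-\cN)$, then apply weak monotonicity to the disjoint pair $\cN,-\cN$ to deduce $S(0|\cN)\ge 0$. The only cosmetic difference is that the paper first uses translation invariance to replace $h$ by the manifestly reflection-symmetric $\widetilde h=\sum_i (h^i_{0,e_i}+h^i_{-e_i,0})/2$ and then argues the optimizer itself may be taken reflection symmetric, whereas you symmetrize an arbitrary feasible state and combine the swap-symmetry of $h$ with local TI of the witness to see that the objective is unchanged---the same ingredients, reorganized.
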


\begin{proof}
The first inequality is immediate.
We start with the value
\[
\textup{SDP}=\min_{\rho\in\LocTI{0\cup\cN\cup-\cN}} \Tr[h\rho].
\]
Due to the translation-invariance this is the same as
\[
\min_{\rho\in\LocTI{0\cup\cN\cup-\cN}} \Tr[\widetilde h\rho].
\]
with $\widetilde h=\sum_{i=1}^D (h^i_{0,e_i}+h^i_{-e_i,0})/2$ and the variable over the same system.
This problem is intrinsically reflection symmetric, which means that the optimizer $\rho^*$ can be chosen reflection symmetric.
Thereby we also have $S(0|\mathcal N)_{\rho^*}=S(0|-\mathcal N)_{\rho^*}$ and by using weak monotonicity we deduce
\[
S(0|\mathcal N)_{\rho^*}=\frac{S(0|\mathcal N)_{\rho^*}+S(0|-\mathcal N)_{\rho^*}}2\ge0,
\]
which holds because $\mathcal N$ and $-\mathcal N$ are disjoint.
This shows that any optimizer of the SDP on $2l-1$ sites gives a feasible point with same objective value and satisfying the MED inequality. This concludes the proof.
\end{proof}

\section{Numerical experiments}
\label{sec:numerical}
In this section we present numerical experiments for the relaxations with entropy constraints.
The model we consider is the XXZ-Hamiltonian with the nearest-neighbour interaction
\begin{equation}\label{eq:xxzhterm}
h=-\sigma_x\otimes\sigma_x-\sigma_y\otimes\sigma_y-\Delta\sigma_z\otimes\sigma_z
\end{equation}
on an infinite chain.
The model is gapless for $|\Delta|\le1$ and gapped otherwise \cite{franchini2017}.
For $\Delta=0$ this model is also called the XY-Hamiltonian.
We consider the values obtained by the relaxation $\LocTI{l}$ and its strengthening $\WMTI{l}$ based on entropy constraint $S(l|1\ldots l-1) \geq 0$, see Equations \eqref{eq:loctil} and \eqref{eq:WMTIl}.

The resulting convex optimization problems can be solved using different algorithms. One approach is to rely on semidefinite programming solvers, via approximations of the (conditional) entropy function \cite{fawzi2019semidefinite}. Another approach is to use interior-point methods that support the quantum relative entropy cone \cite{fawzi2022optimal,coey2022solving,karimi2020primal}. Finally, another approach is to use custom first-order splitting methods such as \cite{chambolle2011first}, see also \cite{dcbregmanAISTATS}. We have used the latter for the experiments in this section.

We start by comparing the convergence with $l$ for the XY-Hamiltonian, see Figure \ref{fig:MEDvsSDPXY}.
We go up to $l=8$ for the problem with entropy constraints which takes about $900s$ (about $250s$ at $l=7$) and $l=9$ for the standard relaxation $\LocTI{l}$ in about $200s$ (about $20s$ at $l=8$) on a laptop computer. 
The graph shows a significant improvement to the optimal value by using entropy constraints and for a fixed runtime, the entropy-constrained relaxation value outperforms the simpler SDP.

\edit{For comparison, we also add equivalent plots for two gapped models, the XXZ-model at $\Delta=-2$ and the transverse field Ising (TFI) model
\[
h=-\sigma_z\otimes\sigma_z-g\sigma_x
\]
at $g=0.5$ in Figure~\ref{fig:MEDvsSDPTFI}.
While gapped models are typically computationally easier, it is not a priori clear whether the marginal relaxations perform better for these models.
We do, however, observe much faster convergence for the TFI model (while convergence for the XXZ model is comparable to the XY model).
Interestingly, the roughly quadratic speedup in terms of variable sizes persists even in this case.}
\begin{figure}[H]
\centering
%
%
\begin{tikzpicture}

\begin{axis}[%
width=4.22in,
height=3.33in,
scale only axis,
xmin=3,
xmax=9,
xlabel style={font=\color{white!15!black}},
xlabel={$l$},
xmode=log,
xtick={3,4,5,6,7,8,9},
xticklabels={3,4,5,6,7,8,9},
ymin=0.008,
ymax=0.15,
ylabel style={font=\color{white!15!black}},
ylabel={$e_0 - \hat{e}$},
axis background/.style={fill=white},
axis x line*=bottom,
axis y line*=left,
ymode=log,
]
\addplot [color=red, very thick]
  table[row sep=crcr]{%
3	0.141016130535025\\
4	0.14107728113447\\
5	0.0600937822357372\\
6	0.0601905074573552\\
7	0.0333290622071962\\
8	0.0333228948179121\\
9	0.0211472334540042\\
}; \addlegendentry{$\LocTI{l}$ (w/o entropy constraints)}
\addplot [color=blue, very thick]
  table[row sep=crcr]{%
3	0.0884756675434113\\
4	0.0438847550199803\\
5	0.0263988628195859\\
6	0.0177077538405186\\
7	0.0126865223570045\\
8	0.00958114911505192\\
}; \addlegendentry{$\WMTI{l}$ (with entropy constraints)}
\end{axis}

\end{tikzpicture}%
\caption{\label{fig:MEDvsSDPXY} Accuracy of the relaxations with and without entropy constraints for the XY-Hamiltonian on an infinite 1D chain. We verify that the relaxation with entropy constraint gives a value which is not better than the standard SDP relaxation at level $2l-1$. It is surprising to note however that it is better than the value of the SDP for all levels up to $2l-2$.}
\end{figure}
\begin{figure}[H]
\centering
%
%
\begin{tikzpicture}

\begin{axis}[%
width=2.4in,
height=2in,
scale only axis,
xmode=log,
xmin=3,
xmax=9,
xtick={3,4,5,6,7,8,9},
xticklabels={3,4,5,6,7,8,9},
xminorticks=true,
xlabel style={font=\color{white!15!black}},
xlabel={$l$},
ymode=log,
ymin=1e-05,
ymax=0.0138058577061693,
yminorticks=true,
ylabel style={font=\color{white!15!black}},
ylabel={$e_0 - \hat{e}$},
axis background/.style={fill=white},
axis x line*=bottom,
axis y line*=left
]
\addplot [color=red]
  table[row sep=crcr]{%
3	0.0138058577061693\\
4	0.0043451917086379\\
5	0.00152420351121618\\
6	0.000571351451551294\\
7	0.000224111718668762\\
8	9.08694267665044e-05\\
9	3.7788464355204e-05\\
};\addlegendentry{$\LocTI{l}$}
\addplot [color=blue]
  table[row sep=crcr]{%
3	0.00156785718148789\\
4	0.000242815563757715\\
5	4.64628311558357e-05\\
6	1.61678862358183e-05\\
7	1.17924702998096e-05\\
8	-8.46576264201815e-06\\
};\addlegendentry{$\WMTI{l}$}
\end{axis}

\end{tikzpicture}%
%
%
\begin{tikzpicture}

\begin{axis}[%
width=2.4in,
height=2in,
scale only axis,
xmode=log,
xmin=3,
xmax=9,
xtick={3,4,5,6,7,8,9},
xticklabels={3,4,5,6,7,8,9},
xminorticks=true,
xlabel style={font=\color{white!15!black}},
xlabel={$l$},
ymode=log,
ymin=0.01,
ymax=0.263162630163822,
yminorticks=true,
ylabel style={font=\color{white!15!black}},
ylabel={$e_0 - \hat{e}$},
axis background/.style={fill=white},
axis x line*=bottom,
axis y line*=left
]
\addplot [color=red]
  table[row sep=crcr]{%
3	0.263162625645832\\
4	0.263162630163822\\
5	0.10162682414873\\
6	0.101626820543035\\
7	0.0508261683808451\\
8	0.050826168074332\\
9	0.0289452785822482\\
};\addlegendentry{$\LocTI{l}$}
\addplot [color=blue ]
  table[row sep=crcr]{%
3	0.157550373509086\\
4	0.0705342951551127\\
5	0.0385078187077763\\
6	0.0235037483819722\\
7	0.0150632505270978\\
8	0.0106350568378826\\
};\addlegendentry{$\WMTI{l}$}
\end{axis}

\end{tikzpicture}%
\caption{\label{fig:MEDvsSDPTFI} \edit{Accuracy of the TFI model at $g=0.5$ (left) and the XXZ model at $\Delta=-2$ (right). Both models are gapped. We observe a similar speedup to the XY-model, even for the much faster convergence rate in case of the TFI.}}
\end{figure}
Returning to the XXZ model we consider the geometry of the set of marginals.
Due to the symmetry of the Hamiltonian, one can show that the two-body marginal of the ground state must be of the form\footnote{More precisely, we use the fact that the Hamiltonian is invariant under the action of the unitaries $U\in\{\sigma_x,\sigma_y,\sigma_z,S\}$ where $S=\exp(i\pi(1-\sigma_z)/4)$, i.e., $U^{\otimes 2}h (U^\dagger)^{\otimes 2} = h$. The set of states \eqref{eq:parametrizedXXZStates} is exactly the set left invariant by this action.}
\begin{equation}\label{eq:parametrizedXXZStates}
\rho(x,z)=\frac14(\id\otimes\id+x(\sigma_x\otimes\sigma_x+\sigma_y\otimes\sigma_y)+z\sigma_z\otimes\sigma_z).
\end{equation}

In Figure~\ref{fig:MEDvsSDPStateSet} we compare the convex sets $\TI$ with $\LocTI{l}$ and $\WMTI{l}$ on the two-dimensional slice \eqref{eq:parametrizedXXZStates}, as done in \cite{Verstraete2006}, i.e.,
we plot the set of valid states
\[
\{(x,z)\in\mathbb{R}^2: \rho(x,z)\in\TI\},
\]
which we obtain from the analytic solution for infinite systems \cite{Yang1966} with its relaxations
\[
\{(x,z)\in\mathbb{R}^2: \rho(x,z)\in\hat\cC\}.
\]
\edit{We depict the approximation errors in Figure~\ref{fig:EnergyErrorsAni}.}

\begin{figure}[H]
\centering
\input{stateSetAni.tex}
\caption{\label{fig:MEDvsSDPStateSet} Two-dimensional slice~\eqref{eq:parametrizedXXZStates} of sets of feasible 2-body marginals.
The blue lines are for the MED, the red lines for the unconstrained marginal relaxation, and the green line is for the exact analytic solution.
As the size of the marginals increases the outer approximations become better.
\edit{Note that we find numerically that $\LocTI{3}=\LocTI{4}$. We obtain almost the same  optimal speedups as in Figure~\ref{fig:MEDvsSDPXY}.}}
\end{figure}
\begin{figure}[H]
\centering
%
%
\begin{tikzpicture}

\begin{axis}[%
width=5.4in,
height=3.5in,
scale only axis,
xmin=-9,
xmax=1,
xlabel=$\Delta$,
xtick={-8,-6,-4,-2,0},
ymode=log,
ymin=0.001,
ymax=0.277226277372263,
yminorticks=true,
ylabel=$e_0-\hat e$,
axis background/.style={fill=white},
axis x line*=bottom,
axis y line*=left
]
\addplot [color=blue]
  table[row sep=crcr]{%
-9	0.0375208115969095\\
-8.8	0.0386241621077339\\
-8.6	0.0397844795973885\\
-8.4	0.0410058703455078\\
-8.2	0.0422928099127446\\
-8	0.0436501811337973\\
-7.8	0.0450833160969228\\
-7.6	0.0465980424090739\\
-7.4	0.0482007340022674\\
-7.2	0.0498983666541841\\
-7	0.0516985782543884\\
-6.8	0.0536097336201928\\
-6.6	0.0556409933131725\\
-6.4	0.0578023853711747\\
-6.2	0.0601048780708293\\
-6	0.0625604506530646\\
-5.8	0.0651821572099553\\
-5.6	0.0679841764022857\\
-5.4	0.0709818360006915\\
-5.2	0.0741915959139581\\
-5	0.0776309656566889\\
-4.8	0.0813183210675534\\
-4.6	0.085272569019236\\
-4.4	0.0895125857276771\\
-4.2	0.0940563210668817\\
-4	0.0989194138660761\\
-3.8	0.104113095913368\\
-3.6	0.109641068160808\\
-3.4	0.115494903172445\\
-3.2	0.12164735592861\\
-3	0.12804275076835\\
-2.8	0.134583379812872\\
-2.6	0.141110689555812\\
-2.4	0.1473802171091\\
-2.2	0.153030494494136\\
-2	0.157550373509086\\
-1.8	0.160261281836079\\
-1.6	0.160362161500109\\
-1.4	0.157152898831154\\
-1.2	0.150601871403287\\
-1	0.141815634394115\\
-0.800000000000001	0.131948633182454\\
-0.6	0.12154539882879\\
-0.4	0.110838856971885\\
-0.199999999999999	0.0998577650453043\\
0	0.0884756675434113\\
0.199999999999999	0.0764431020566096\\
0.4	0.0633846161398217\\
0.6	0.0486171727114446\\
0.800000000000001	0.0300719203698803\\
}node[above,pos=0.3]{$l=3$};\addlegendentry{$\WM{l}$}
\addplot [color=blue, forget plot]
  table[row sep=crcr]{%
-9	0.00549020157934166\\
-8.8	0.00576233236239787\\
-8.6	0.00605221479555418\\
-8.4	0.00636106137655545\\
-8.2	0.00669023564822879\\
-8	0.00704134769250864\\
-7.8	0.00741626549717367\\
-7.6	0.00781725155358082\\
-7.4	0.00824707455317597\\
-7.2	0.00870915127174943\\
-7	0.0092076850980467\\
-6.8	0.00974780304335088\\
-6.6	0.0103356556933072\\
-6.4	0.0109784784682505\\
-6.2	0.0116846161962059\\
-6	0.0124635382926197\\
-5.8	0.0133258831977141\\
-5.6	0.0142835201873384\\
-5.4	0.0153497044631985\\
-5.2	0.0165392508383837\\
-5	0.0178687399458548\\
-4.8	0.0193567298226629\\
-4.6	0.0210239374367012\\
-4.4	0.0228934153895928\\
-4.2	0.0249909037526743\\
-4	0.0273457065646081\\
-3.8	0.0299929156746579\\
-3.6	0.032977535951602\\
-3.4	0.0363574387564141\\
-3.2	0.040194750987089\\
-3	0.0445274854224382\\
-2.8	0.049339124042366\\
-2.6	0.0545495557883657\\
-2.4	0.0600100537724537\\
-2.2	0.0654716908668118\\
-2	0.0705342951551113\\
-1.8	0.0746081648309964\\
-1.6	0.0769399121607819\\
-1.4	0.076819874725595\\
-1.2	0.0741420940233122\\
-1	0.069880519221436\\
-0.800000000000001	0.0650216127884413\\
-0.6	0.0599426280921789\\
-0.4	0.05475080141836\\
-0.199999999999999	0.0494254170554373\\
0	0.0438847550199803\\
0.199999999999999	0.0380150018519922\\
0.4	0.0316977475823705\\
0.6	0.024571018730926\\
0.800000000000001	0.0157602419414316\\
}node[above,pos=0.3]{$l=4$};
\addplot [color=blue, forget plot]
  table[row sep=crcr]{%
-9	0.0016755868565248\\
-8.8	0.00177937474524015\\
-8.6	0.00189130755649813\\
-8.4	0.00201201094050596\\
-8.2	0.00214222824683219\\
-8	0.00228280506369138\\
-7.8	0.00243474242143282\\
-7.6	0.00259916915990654\\
-7.4	0.00277744578541483\\
-7.2	0.00297128871045782\\
-7	0.00318298116263538\\
-6.8	0.00341239238699576\\
-6.6	0.00365410271378241\\
-6.4	0.00390772303690934\\
-6.2	0.00417237009223115\\
-6	0.00444457585399594\\
-5.8	0.0047150619242311\\
-5.6	0.00496255076898411\\
-5.4	0.00518694246826179\\
-5.2	0.00551062918930167\\
-5	0.00596356580361412\\
-4.8	0.0065157629645789\\
-4.6	0.00717031778115551\\
-4.4	0.00794464625219593\\
-4.2	0.00886341150472969\\
-4	0.00995724373779705\\
-3.8	0.0112626887054983\\
-3.6	0.0128219591372178\\
-3.4	0.0146822630552763\\
-3.2	0.0168942483432404\\
-3	0.0195085530814221\\
-2.8	0.0225679951598226\\
-2.6	0.0260906982926818\\
-2.4	0.0300382236599002\\
-2.2	0.0342712727136045\\
-2	0.0385078187077728\\
-1.8	0.0422852995936718\\
-1.6	0.0449235694792662\\
-1.4	0.0456853086114091\\
-1.2	0.0443643525420268\\
-1	0.0418074523857166\\
-0.800000000000001	0.0388993314881496\\
-0.6	0.0359231339888462\\
-0.4	0.0328749905130774\\
-0.199999999999999	0.0297115179851339\\
0	0.0263988628195859\\
0.199999999999999	0.0228897681098768\\
0.4	0.0191058345198376\\
0.6	0.014852816853248\\
0.800000000000001	0.00965769658661597\\
}node[above,pos=0.35]{$l=5$};
\addplot [color=red]
  table[row sep=crcr]{%
-9	0.106222387158706\\
-8.8	0.108418488762368\\
-8.6	0.1107020919672\\
-8.4	0.113078132554625\\
-8.2	0.115551880294781\\
-8	0.118128960860432\\
-7.8	0.120815378021624\\
-7.6	0.123617535629303\\
-7.4	0.126542258683864\\
-7.2	0.129596812498948\\
-7	0.132788918584515\\
-6.8	0.136126765350946\\
-6.6	0.139619011030705\\
-6.4	0.143274775260138\\
-6.2	0.147103614472326\\
-6	0.151115490635583\\
-5.8	0.155320623782263\\
-5.6	0.159729478375649\\
-5.4	0.164352500097788\\
-5.2	0.169199870791313\\
-5	0.174281126519999\\
-4.8	0.179604618498104\\
-4.6	0.185176758362878\\
-4.4	0.191000967282927\\
-4.2	0.197076217971295\\
-4	0.203395016608424\\
-3.8	0.209940613742723\\
-3.6	0.216683154115728\\
-3.4	0.223574369125983\\
-3.2	0.230540277644553\\
-3	0.237471193460868\\
-2.8	0.244208165644157\\
-2.6	0.250524889828155\\
-2.4	0.256104381699533\\
-2.2	0.260511001965388\\
-2	0.263162625645832\\
-1.8	0.263319640481026\\
-1.6	0.260138325152801\\
-1.4	0.252903582711268\\
-1.2	0.241604953146471\\
-1	0.227411277889189\\
-0.800000000000001	0.211578516576653\\
-0.6	0.194779589426671\\
-0.4	0.177376777105443\\
-0.199999999999999	0.159487471675893\\
0	0.140974017621277\\
0.199999999999999	0.121397968976868\\
0.4	0.0999260729327862\\
0.6	0.0751207160636167\\
0.800000000000001	0.0443351384825079\\
}node[above,pos=0.3]{$l=3,4$};\addlegendentry{$\LocTI{l}$}
\addplot [color=red, forget plot]
  table[row sep=crcr]{%
-9	0.0118348230327161\\
-8.8	0.0123534839632278\\
-8.6	0.0129062444102797\\
-8.4	0.0134960787201841\\
-8.2	0.0141262800721176\\
-8	0.014800501345217\\
-7.8	0.0155228009916817\\
-7.6	0.0162976952219047\\
-7.4	0.0171302203957326\\
-7.2	0.0180259816091253\\
-7	0.0189912654137219\\
-6.8	0.0200331002754446\\
-6.6	0.0211593549076934\\
-6.4	0.0223788582855624\\
-6.2	0.0237015204092001\\
-6	0.0251384692031786\\
-5.8	0.0267022007377689\\
-5.6	0.0284067390764928\\
-5.4	0.0302678026241967\\
-5.2	0.0323029664263856\\
-5	0.0345318084605717\\
-4.8	0.0369760114741702\\
-4.6	0.0396593860198893\\
-4.4	0.042607747993892\\
-4.2	0.0458485588101407\\
-4	0.0494101798232345\\
-3.8	0.0533205172530451\\
-3.6	0.0576047281738097\\
-3.4	0.0622814945584547\\
-3.2	0.0673571625995568\\
-3	0.0728167622652864\\
-2.8	0.0786105983786509\\
-2.6	0.0846348587036365\\
-2.4	0.0907047665596226\\
-2.2	0.0965200452842527\\
-2	0.10162682414873\\
-1.8	0.105392527635534\\
-1.6	0.107042313992142\\
-1.4	0.105874466327086\\
-1.2	0.101823008956193\\
-1	0.0959283697672839\\
-0.800000000000001	0.0892561162941412\\
-0.6	0.082256228832108\\
-0.4	0.0750841848533592\\
-0.199999999999999	0.0677318397329301\\
0	0.060093788719207\\
0.199999999999999	0.0520029722263244\\
0.4	0.0432142496161354\\
0.6	0.0332536958369478\\
0.800000000000001	0.0208239508747012\\
}node[above,pos=0.3]{$l=5,6$};
\addplot [color=red, forget plot]
  table[row sep=crcr]{%
-9	0.0016320384975792\\
-8.8	0.00174180308016503\\
-8.6	0.00186151803963064\\
-8.4	0.00199232131353888\\
-8.2	0.00213550763417203\\
-8	0.00229257064770216\\
-7.8	0.00246514914776608\\
-7.6	0.00265523055030314\\
-7.4	0.00286502277305889\\
-7.2	0.0030970997345614\\
-7	0.00335442915839845\\
-6.8	0.00364044869699676\\
-6.6	0.00395916622070036\\
-6.4	0.0043152462801217\\
-6.2	0.00471414135035175\\
-6	0.00516225056097852\\
-5.8	0.0056670909850407\\
-5.6	0.00623751958239716\\
-5.4	0.0068839929172011\\
-5.2	0.00761887908023962\\
-5	0.00845682003324555\\
-4.8	0.00941515515131108\\
-4.6	0.010514397216336\\
-4.4	0.0117787477159208\\
-4.2	0.0132366103678114\\
-4	0.0149210234166741\\
-3.8	0.0168698565756582\\
-3.6	0.0191254970519967\\
-3.4	0.0217335472089353\\
-3.2	0.0247397443188522\\
-3	0.0281838408469546\\
-2.8	0.0320885545184542\\
-2.6	0.0364410002672493\\
-2.4	0.0411636094059675\\
-2.2	0.046072517796635\\
-2	0.0508261683808451\\
-1.8	0.054881053685325\\
-1.6	0.0575066286528862\\
-1.4	0.0579839854113908\\
-1.2	0.0561649772721464\\
-1	0.0529579822356834\\
-0.800000000000001	0.0492780443998784\\
-0.6	0.0454439436141159\\
-0.4	0.041530098364245\\
-0.199999999999999	0.0375112800395576\\
0	0.0333234201413468\\
0.199999999999999	0.028883041060574\\
0.4	0.0240707484024167\\
0.6	0.0186597036534684\\
0.800000000000001	0.0120137367996493\\
}node[above,pos=0.26]{$l=7,8$};
\end{axis}

\end{tikzpicture}%
\caption{\label{fig:EnergyErrorsAni}\edit{The energy errors corresponding to the nontrivial part of the boundary of the state sets in Figure~\ref{fig:MEDvsSDPStateSet}. 
}}
\end{figure}

\section*{Acknowledgements}

We would like to thank Matt Hastings for bringing the paper \cite{Poulin2011} to our attention as well as an anonymous referee for multiple comments that improved the paper.
HF acknowledges funding from UK Research and Innovation (UKRI) under the UK
government’s Horizon Europe funding guarantee EP/X032051/1. OF acknowledges funding by the European Research Council (ERC Grant AlgoQIP, Agreement No. 851716). SOS acknowledges support from the UK Engineering and Physical Sciences Research Council (EPSRC) under grant number EP/W524141/1.

\section*{Author declarations}
The authors have no conflicts to disclose.
The data that support the findings of this study are available from the corresponding author upon reasonable request.

\appendix
\counterwithin{equation}{section}
\renewcommand{\theequation}{A\arabic{equation}}
\section{Edge covering for connected graphs}\label{sec:adjEdge}
We give the proof of Proposition \ref{prop:adjEdge}.
\begin{proof}
The proof is by induction over the number of unmatched edges and the minimum distance between any pair of unmatched edges.
We start from a set $P=\{\{e_1,e_1'\},\ldots\{e_m,e_m'\}\}$ of pairs of edges, where $e_i$ is adjacent to $e_i'$ for all $i$ and the $e_1,e_1',e_2,\ldots e_m'$ are all distinct.
We denote by $\widetilde P$ the set of all edges contained in elements of $P$.
If $E\setminus \widetilde P$ contains less than 2 edges, we are done.
Otherwise, consider the pair of edges $f, g\in E\setminus\widetilde P$ of minimal distance.
If the distance is 0, add $\{f, g\}$ to $P$ and repeat.
If the distance is greater zero, there is a shortest path $f_1,\ldots, f_l$ from $f$ to $g$, with $f_i\in\widetilde P$ (otherwise if $f_i\notin\widetilde P$ then the pair $f_i$, $g$ has smaller distance).
For some $i$, $f_1=e_i$ (without loss of generality we can exchange $e_i$ and $e_i'$).
We remove $\{e_i,e_i'\}$ from $P$ and add $\{e_i',f\}$ if they are adjacent and $\{e_i,f\}$ otherwise. The number of unmatched edges is unchanged but the unmatched edges $e_i'$ or $e_i$ and $g$ are now connected by the path $f_2,\ldots,f_l$, so the minimum distance of unmatched edges has decreased by at least one.
We repeat this procedure until the distance goes to zero, after which we can add a pair of edges and repeat until only one edge is unmatched.
\end{proof}

\section{Definition of the translation-invariant ground energy}\label{sec:existence}
\renewcommand{\theequation}{B\arabic{equation}}
We prove the existence of the limit ground energy by standard arguments \cite{Ohya1993,Eisert2023}.

\begin{theorem}\sloppy
    The following limits exist and are equal
    \begin{equation}
        \label{eq:thmB2}
    \lim_{m\to\infty} \frac{1}{m^D}\lambda_{min}\left(H_{[1,m]^D}\right)=\lim_{m\to\infty}\min_{\rho_m\;TI}\Tr[\rho_m h]=\min_{\rho\in\TI}\Tr[\rho h],
    \end{equation}
    where the minimization in the middle expression goes over all translation-invariant states in $\dens(\otimes_{v\in[1,m]^D}\CC^d)$.
\end{theorem}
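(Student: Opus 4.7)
The plan is to prove the three quantities agree in two stages: first establishing the leftmost equality for every finite $m$ (so the two limits automatically coincide), and then showing by compactness that the common value converges to $\min_{\rho\in\TI}\Tr[\rho h]$.

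For the first equality, fix $m$ and note that $H_{[1,m]^D}$ commutes with the unitary action of the translation group $G=(\Z/m\Z)^D$. For any state $\rho_m$, the symmetrized state $\bar\rho_m = \frac{1}{|G|}\sum_{g\in G}\tau_g(\rho_m)$ is translation-invariant and has the same energy as $\rho_m$ by convexity together with the invariance of $H_{[1,m]^D}$; so restricting the variational formula for $\lambda_{\min}$ to TI states loses nothing. For such a TI $\rho_m$, all terms $\Tr[h_v\rho_m]$ are equal to a common value $\Tr[h\rho]$, where $\rho$ is the $(D+1)$-body marginal of $\rho_m$ on $\{0,e_1,\dots,e_D\}$, giving $\Tr[H_{[1,m]^D}\rho_m] = m^D\Tr[h\rho]$. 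Dividing by $m^D$ yields the first equality at each $m$, and hence for the limits.

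Next I would prove $\lim_{m\to\infty} f(m) = \min_{\rho\in\TI}\Tr[h\rho]$, where $f(m)=\min_{\rho_m\;TI}\Tr[\rho_m h]$. For the upper bound, any $\rho\in\TI=\bigcap_m\LocTI{[1,m]^D}$ is by definition the marginal of some translation-compatible state on $[1,m]^D$; symmetrizing that extension over $(\Z/m\Z)^D$ produces a TI extension with the same marginal, so $\rho$ is feasible for $f(m)$ and $f(m)\le\Tr[h\rho]$ for every $m$. For the lower bound, choose minimizers $\rho_m^\star$ of $f(m)$ and, by compactness of $\dens((\CC^d)^{\otimes(D+1)})$, extract a convergent subsequence $\rho_{m_k}^\star\to\rho^\star$. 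It then suffices to show $\rho^\star\in\TI$. Fixing $m'$ and restricting to $m_k$ that are multiples of $m'$, marginalizing the $[1,m_k]^D$-TI extension of $\rho_{m_k}^\star$ onto any $m'$-sub-torus yields a TI state on $[1,m']^D$ with the same $(D+1)$-body marginal, so $\rho_{m_k}^\star\in\LocTI{[1,m']^D}$; closedness of $\LocTI{[1,m']^D}$ gives $\rho^\star\in\LocTI{[1,m']^D}$, and intersecting over $m'$ gives $\rho^\star\in\TI$.

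The main obstacle is this final step: verifying that the sub-torus marginal of a TI state really inherits the compatibility constraints defining $\LocTI{[1,m']^D}$, and handling the subsequence extraction uniformly in $m'$ via a diagonal argument. A more robust alternative bypassing the divisibility restriction is to consider, for each fixed $L$, the marginals on the box $[-L,L]^D$ of the TI extension of $\rho_{m_k}^\star$, use compactness together with a diagonal argument to extract a consistent family of marginals as $L$ grows, and invoke a Kolmogorov-type extension to assemble these into a translation-invariant state on $\Z^D$ whose marginal on $\{0,e_1,\dots,e_D\}$ is $\rho^\star$.
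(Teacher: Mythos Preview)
Your argument conflates two notions of translation invariance for a state on $[1,m]^D$: invariance under the cyclic group $(\Z/m\Z)^D$ (what the paper's proof calls \emph{periodic}) and the weaker ``open'' compatibility condition that defines $\LocTI{}$ (in 1D simply $\Tr_1[\rho_m]=\Tr_m[\rho_m]$). The middle expression uses the latter, and with that reading the first equality does \emph{not} hold for each finite $m$. For instance in 1D with $m=3$ and $h=-\ketbra{\psi^-}{\psi^-}$ one computes $\tfrac{1}{3}\lambda_{\min}(H_{[1,3]})=-\tfrac12$, whereas $\min_{\rho_3\;TI}\Tr[h\rho_{12}]=-\tfrac34$ (attained by the ground state of the open chain $h_{12}+h_{23}$, which has $\rho_{12}=\rho_{23}$). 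Your symmetrization step only establishes the inequality $\tfrac{1}{m^D}\lambda_{\min}\ge\min_{\rho_m\;TI}\Tr[h\rho]$; the paper closes the remaining gap in the limit via the norm estimate $\bigl\|m^{-D}H_{[1,m]^D}-(m-1)^{-D}H_{[1,m-1]^D}\bigr\|=O(1/m)$.

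Even if you redefine the middle term using the periodic notion so that your first step becomes exact, your upper bound for the second equality breaks: symmetrizing an open-TI extension $\rho_m$ of $\rho\in\TI$ over $(\Z/m\Z)^D$ does \emph{not} preserve the marginal on $\{0,e_1,\dots,e_D\}$. In 1D the symmetrized marginal on $\{1,2\}$ equals $\tfrac{m-1}{m}(\rho_m)_{12}+\tfrac{1}{m}(\rho_m)_{m,1}$, and the wrap-around marginal $(\rho_m)_{m,1}$ is unconstrained by open-TI. The discrepancy is only $O(1/m)$, so the route is repairable, but ``$f(m)\le\Tr[h\rho]$ for every $m$'' is false as written.

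Your compactness/diagonal argument for the lower bound is a genuinely different route from the paper's. The paper instead constructs explicitly, from any finite TI minimizer $\rho_m$, an element of $\TI$ by averaging translates of $\bigotimes_j\rho_m$ and bounding the change in objective by $2D\|h\|/m^D$; this avoids compactness entirely and directly yields an $O(1/m)$ rate. Your approach would also work once the issues above are fixed---note in particular that the marginal of a periodic state on a sub-box is open-TI but generally not periodic, so the ``sub-torus'' picture and the divisibility restriction $m'\mid m_k$ are both unnecessary.
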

\begin{proof}
We first note that the existence of the limit in the middle expression is trivial since it is an increasing sequence.
Indeed, a translation-invariant state on $[1,m]^D$ defines a translation-invariant state on $[1,m-1]^D$.

We prove the first equality. 
We rewrite the left-hand side as a minimzation:
\[
    \frac{1}{m^D}\lambda_{min}\left(H_{[1,m]^D}\right)=\frac{1}{m^D}\min_{\rho'_m}\Tr[\rho'_m H_{[1,m]^D}]
\]
over all states $\rho'_m\in \dens(\otimes_{v\in[1,m]^D}\CC^d)$.
Clearly, the minimizer $\rho'_m$ can be chosen periodic 
\[
    \frac{1}{m^D}\min_{\rho'_m}\Tr[\rho'_m H_{[1,m]^D}]=\frac{1}{m^D}\min_{\rho'_m \textrm{ periodic}}\Tr[\rho'_m H_{[1,m]^D}]=\min_{\rho'_m \textrm{ periodic}}\Tr[\rho'_m h]
\]
meaning that $\rho'_m$ is invariant under a permutation of sites mapping $v\to v+t$ modulo $m$ $\forall t\in[1,m]^D$, which is a stronger condition than translation invariance. Therefore,
\[
    \frac{1}{m^D}\lambda_{min}\left(H_{[1,m]^D}\right)\ge\min_{\rho_m\;TI}\Tr[\rho_m h].
\]
For the reverse direction we note that
\[
\min_{\rho_m\;TI}\Tr[\rho_m h]=\frac{1}{(m-1)^D}\min_{\rho_m\;TI}\Tr[\rho_m H_{[1,m-1]^D}]
\]
due to translation-invariance and
\begin{align*}
\left\|\frac{1}{m^D}H_{[1,m]^D}-\frac{1}{(m-1)^D}H_{[1,m-1]^D}\right\|&\le\left\|\frac{1}{m^D}H_{[1,m]^D}-\frac{1}{m^D}H_{[1,m-1]^D}\right\|\\
&\quad+\left\|\frac{1}{m^D}H_{[1,m-1]^D}-\frac{1}{(m-1)^D}H_{[1,m-1]^D}\right\|\\
&\le\|h\| \left(1-\left(\frac{m-1}{m}\right)^D\right)+\left\|H_{[1,m-1]^D}\right\|\left(\frac{1}{(m-1)^D}-\frac{1}{m^D}\right)\\
&=2\|h\|\left(1-\left(\frac{m-1}{m}\right)^D\right).
\end{align*}
We conclude the following bound
\begin{align*}
\min_{\rho_m\;TI}\Tr\left[\rho_m\frac{H_{[1,m-1]^D}}{(m-1)^D}\right]&\ge\min_{\rho_m\;TI} \Tr\left[\rho_m\frac{H_{[1,m]^D}}{m^D}\right]-\left\|\frac{H_{[1,m]^D}}{m^D}-\frac{H_{[1,m-1]^D}}{(m-1)^D}\right\|\\
&\ge\frac{1}{m^D} \lambda_{\min}(H_{[1,m]^D})-2\|h\|\left(1-\left(\frac{m-1}{m}\right)^D\right),
\end{align*}
where the first inequality follows from lower bounding the objective, and in the second inequality we drop the translation-invariance constraint. Since the last term decays with $m$, the existence of limits and the first equality follow.

To prove the second equality in \eqref{eq:thmB2}, we note that one inequality is again trivial
\[
\min_{\rho\in\TI}\Tr[\rho h]\ge\lim_{m\to\infty}\min_{\rho_m\;TI}\Tr[\rho_m h]
\]
as the minimizer in $\TI$ defines finite translation-invariant states of any size and thereby feasible points for every element of the sequence.

To prove the reverse inequality we need to construct an infinite translation-invariant state from a finite one $\rho_m$ that achieves the same objective value up to some error that decays with $m$.
Let us consider the sequence of states
\[
\rho_{m,n}=\frac{1}{m^D}\sum_{i\in[1,m]^D}\tau_i\left(\bigotimes_{j=1}^n \rho_m\right)
\]
defined on $\dens\left(\otimes_{i\in[1,nm]^D}\CC^d\right)$. $\tau_i$ is the operator translating the system by $i$ sites (with periodic boundary conditions).
Since these states are all periodic and hence translation-invariant and have identical marginals on sites $[1,m]$ for any $n\ge2$, it defines an element of $\TI$.
Furthermore, for the objective value we have, using the notation $\vec 1=(1,\ldots,1)$,
\begin{align*}
\left|\Tr[\rho_{m,n} h]-\Tr[\rho_m h]\right|&=\left|\frac{1}{m^D}\sum_{i\in[1,m]^D}\Tr\left[\left(\tau_i\left(\bigotimes_{j=1}^n \rho_m\right)-\rho_m\right)h\right]\right|\\
&=\left|\frac{1}{m^D}\sum_{i=1}^D\Tr\left[\left(\tau_{\vec1+e_i}\left(\bigotimes_{j=1}^n \rho_m\right)-\rho_m \right)h\right]\right|\\
&\le \frac{2D}{m^D}\|h\|
\end{align*}
where the only nonzero term in the sum appear, where the marginal on sites $\vec1$ and $\vec1+e_i$ come from two different copies of $\rho_m$.
\end{proof}
The above proof also implies a convergence rate of $O(1/m)$ for both limits.

\bibliographystyle{ieeetr}
\bibliography{refs}
\end{document}